\newtheorem{proposition}{Proposition}
\begin{document}

\title{Regularized Contrastive Partial Multi-view Outlier Detection}
\author{Yijia Wang}
\affiliation{%
\institution{SCST, UCAS}
\city{Beijing}
\country{China}
}
\email{wangyijia22@mails.ucas.ac.cn}

\author{Qianqian Xu}
\authornotemark[1]
\affiliation{%
\institution{IIP, ICT, CAS}
\city{Beijing}
\country{China}
}
\email{xuqianqian@ict.ac.cn}

\author{Yangbangyan Jiang}
\affiliation{%
\institution{SCST, UCAS}
\city{Beijing}
\country{China}
}
\email{jiangyangbangyan@ucas.ac.cn}

\author{Siran Dai}
\affiliation{%
\institution{IIE, CAS}
\institution{SCS, UCAS}
\city{Beijing}
\country{China}
}
\email{daisiran@iie.ac.cn}

\author{Qingming Huang}
\authornotemark[1]
\affiliation{%
 \institution{SCST, UCAS}
 \institution{IIP, ICT, CAS}
 \institution{BDKM, CAS}
 \city{Beijing}
 \country{China}
}
\email{qmhuang@ucas.ac.cn}

\begin{abstract}
In recent years, multi-view outlier detection (MVOD) methods have advanced significantly, aiming to identify outliers within multi-view datasets. A key point is to better detect class outliers and class-attribute outliers, which only exist in multi-view data. However, existing methods either is not able to reduce the impact of outliers when learning view-consistent information, or struggle in cases with varying neighborhood structures. Moreover, most of them do not apply to partial multi-view data in real-world scenarios. To overcome these drawbacks, we propose a novel method named Regularized Contrastive Partial Multi-view Outlier Detection (RCPMOD). In this framework, we utilize contrastive learning to learn view-consistent information and distinguish outliers by the degree of consistency. Specifically, we propose (1) An outlier-aware contrastive loss with a potential outlier memory bank to eliminate their bias motivated by a theoretical analysis. (2) A neighbor alignment contrastive loss to capture the view-shared local structural correlation. (3) A spreading regularization loss to prevent the model from overfitting over outliers. With the Cross-view Relation Transfer technique, we could easily impute the missing view samples based on the features of neighbors. Experimental results on four benchmark datasets demonstrate that our proposed approach could outperform state-of-the-art competitors under different settings.
\end{abstract}


\begin{CCSXML}
<ccs2012>
   <concept>
       <concept_id>10010147.10010257.10010258.10010260.10010229</concept_id>
       <concept_desc>Computing methodologies~Anomaly detection</concept_desc>
       <concept_significance>500</concept_significance>
       </concept>
 </ccs2012>
\end{CCSXML}

\ccsdesc[500]{Computing methodologies~Anomaly detection}

\keywords{Multi-view data, Outlier detection, Unsupervised learning, Contrastive learning}



\maketitle
\section{Introduction}
\label{sec:intro}
\begin{figure}[t]
  \centering
   \includegraphics[width=1.0\linewidth]{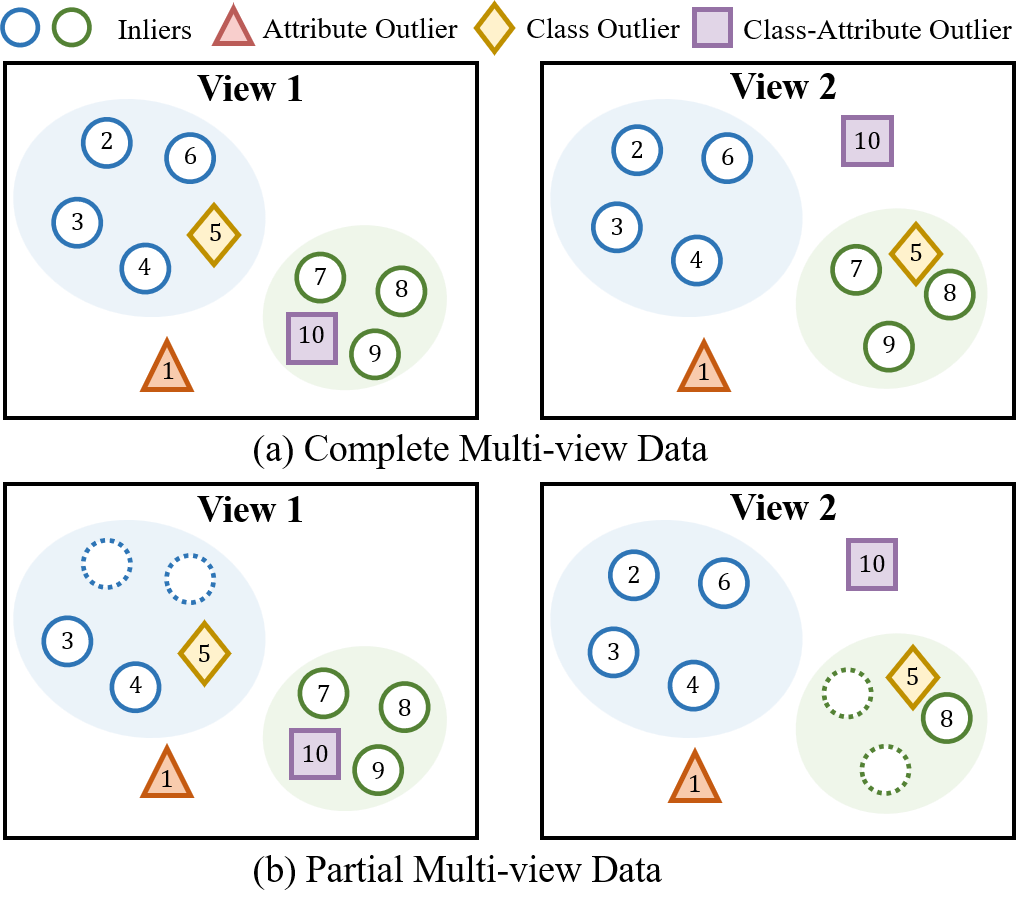}
   \caption{Different types of outliers in complete and partial multi-view data. Dashed circles represent the missing views.}
   \label{fig:Outliers}
\end{figure}
Multi-view data, which describes an entity with features sourced from various sensors or modalities, is ubiquitous in multimedia applications \cite{MVCIR,Triple,graphmm,MMplus1,MMplus2,viplref1,viplref2,viplref3}. For example, multi-view data of a film can include textual and visual views that capture different aspects, and multi-view data of an image can be formed by color or shape feature descriptors. Each view contributes both consensus and complementary information, enabling a more comprehensive description of the underlying data. Consequently, multi-view learning plays a crucial role in improving the generalization performance \cite{CCA,ClusterSurvey,AE2,MMplusRep1,MMplusRep2,MMlearning}. However, since the quality of data collection is difficult to control, the outliers are inevitable in real-world data. What's worse, as the organization of multi-view data is usually more complicated, multi-view outliers also exhibit more diverse patterns than single-view ones. Accordingly, detecting these multi-view outliers without labels becomes more challenging.

As shown in Fig.~\ref{fig:Outliers}, multi-view outliers can be sorted into three types:
\begin{itemize}
    \item \textbf{Attribute outliers} (red triangle) are which consistently differ from most other samples in all views.  
    \item \textbf{Class outliers} (yellow  diamond) are with inconsistent features and cluster membership across different views.
    \item \textbf{Class-attribute outliers} (purple square) exhibit the characteristic of attribute outliers in some view while the features are inconsistent across different views.  
\end{itemize}

To date, a plethora of multi-view outlier detection (MVOD) methods have been devised for this problem \cite{plus1,plus5,HOAD,DMOD,MLRA,MUVAD,LDSR,NCMOD,SRLSP,MODGD}. These approaches mainly focus on the identification of multi-view-data-specific outliers, \textit{i.e.,} class outliers and class-attribute outliers (hereinafter referred to as ``class-related outliers'' for brevity), given their substantial impact on overall detection efficacy. According to the ways of detecting class-related outliers, recent MVOD methods roughly fall into two categories: (1) Neighborhood similarity based methods such as NCMOD \cite{NCMOD}, SRLSP \cite{SRLSP} and MODGD \cite{MODGD}. They assume that the neighborhood structures of class-related outliers are inconsistent across views, and then identify outliers by comparing the neighbors of a sample between the view-specific and consensus similarity graphs. (2) View consistency based methods like LDSR \cite{LDSR} and MODDIS \cite{MODDIS}. They assess the level of view-consistent information using latent representations, and detect class-related outliers based on the extent of view-inconsistency. 

While both types of methods have demonstrated good performance, they also have their own limitations. On one hand, neighborhood similarity-based methods might struggle in scenarios where the neighborhood structures of samples exhibit significant variations. For example, when an inlier is surrounded by many class-related outliers, its neighborhood structure differs across views. On the other hand, although view consistency based methods are not affected by varying neighborhood structures, their deficiency in adequately handling class-related outliers leads to a suboptimal performance. Since class-related outliers exhibit large view-inconsistency, learning from inliers and these outliers equally will hinder the model to capture the correct view-consistent information. 

Another shortcoming of existing methods is that they can only handle the complete multi-view data. Unfortunately, in real-world applications, certain views of some instances might be missing, resulting in the partial multi-view data \cite{viplref4}. The missing views exacerbate the challenge of outlier detection, as the neighborhood and view consistencies are more difficult to measure, as illustrated in Fig.~\ref{fig:Outliers}b. To effectively leverage the incomplete data, imputing the missing views becomes necessary. As an early trial, CL \cite{CL} exploits the inter-dependence across views to facilitate both view completion and outlier detection. Yet it is designed specifically for identifying class outliers. Therefore, how to better tackle the partial MVOD problem remains underexplored.

To overcome these drawbacks, we propose a novel MVOD framework, which is established on view-specific autoencoders and models the latent view consistency through contrastive learning. Considering that class-related outliers will bias the view consistency in the na\"ive contrastive learning, we design an outlier-aware contrastive loss with a memory bank restoring potential outliers in each mini-batch motivated by a theoretical analysis. They are then adopted as additional negative samples for contrastive learning, to push them away from inliers and mitigate their negative impact. Noticing that neighborhood structural consistency is also beneficial to promote the view consistency, we propose a neighbor alignment contrastive loss to explicitly capture the neighborhood structural consistency across views. Moreover, a spreading regularization is employed to overcome the problem of overfitting over outliers. Finally, a flexible and effective outlier scoring criteria is tailored for the proposed contrastive learning framework. With the help of neighbor alignment, we can adopt the Cross-view Relation Transfer (CRT) technique \cite{Crossview} for accurate missing data imputation based on the neighbor features.



In summary, our major contributions are three-fold:
\begin{itemize}
  \item We propose a novel contrastive-learning-based partial multi-view outlier detection framework called RCPMOD, which is capable of handling partial multi-view data and simultaneously detecting three types of outliers.
  \item In the core of the framework, we propose an outlier-aware contrastive loss and a neighbor alignment contrastive loss to eliminate the bias caused by outliers and maximize the view consistency. We further employ a spreading regularization to mitigate  the outlier overfitting in contrastive learning.
  \item With these learning techniques, we design the corresponding outlier scoring rule based on view consistency.
\end{itemize}
The effectiveness of the proposed framework is validated on four benchmark datasets under various outlier ratios and view missing rates, together with ablation and sensitivity studies.

\section{Related Work}
\subsection{Multi-view Outlier Detection}
Outlier detection is an important and challenging task in machine learning \cite{outlier1,outlier2}. Currently, the majority of the methods for detecting outliers are designed for single-view data.\cite{outlier3,outlier4,outlier5,outlier6,outlier7,MMplusod,viplref5}. 
However, the multi-view datasets presents a more intricate situation with three types of outliers holding diverse characteristics.

In the past decade, several methods for MVOD have been developed. The transition from single-view to multi-view outlier detection began with HOAD \cite{HOAD}, which was the first to detect class outliers. Early methods \cite{plus1,plus5,HOAD} relied on clear cluster structures. DMOD \cite{DMOD} advanced the field by using latent coefficients and construction errors to address both class and attribute outliers without relying on clear cluster structures. Following DMOD, MLRA \cite{MLRA}, MLRA+ \cite{MLRA+}, and MuvAD \cite{MUVAD} improved performance but were limited to two views. LDSR \cite{LDSR} overcame view number limitations by dividing representations into view-consistent and view-inconsistent parts, introducing the concept of class-attribute outliers. MODDIS \cite{MODDIS} adopted a similar approach with deep learning to learn these parts using separate networks.

Recently, methods based on neighborhood similarity were developed. NCMOD \cite{NCMOD} used an autoencoder to map samples to a latent space and constructed neighborhood consensus graphs to detect outliers. SRLSP \cite{SRLSP} also constructed neighbor similarity graphs and fused them with a graph fusion term. MODGD \cite{MODGD} introduced a row-wise sparse outlier matrix for outlier detection when fusing neighborhood graphs.

\noindent\textbf{Partial multi-view outlier detection.} The MVOD problem is underexplored when some views of data are missing. To the best of our knowledge, there is only one early trial, \textit{i.e.,} CL \cite{CL}, tailored for this task. It proposes a Collective Learning based framework that exploits inter-dependence among different views for view completion and outlier detection. However, CL could only handle class outliers and fails when facing attribute outliers.

Although partial view problem has still been underexplored in MVOD, it is an important problem in multi-view (MV) learning and has gained much attention recently due to its realisticness. In fact, partial MVOD can serve as a preliminary task to enhance the outlier-robustness of other partial MV learning tasks.

\subsection{Contrastive Multi-view Learning} 
Contrastive learning stands out as a notable method in unsupervised representation learning \cite{ContrCoding,UnderstandingContrastive,MMpluscontr1,MMpluscontr4}. It learns intrinsic information of unsupervised data by enhancing the similarity between positive pairs and reducing it among negative pairs. 
The method has also been extended to multi-view learning, with significant works in this area including \cite{CMC,MVContr2}. A representavtive work is \cite{CMC}, which introduces a multi-view coding framework using contrastive learning to understand scene semantics better. Recent efforts have been made to explore the implementations of contrastive learning in multi-view clustering \cite{COMPLETER,Reconsidering,multi-level,graphcontrpartial,Robust}. For example, MFLVC \cite{multi-level} combines instance- and cluster-level
contrastive learning on high-level features to learn more common semantics across views, AGCL \cite{graphcontrpartial} adopt within-view graph
contrastive learning and cross-view graph consistency learning to learn more discriminative representations. 

In this paper, we utilize contrastive learning in MVOD to pursue the cross-view consistency, with some special designs to alleviate the influence of outliers. Meanwhile, a neighbor alignment contrastive module is designed to further learn the neighborhood structural consistency and improve the imputation performance.
 
\section{Methodology}
\begin{figure*}[t]
  \centering
   \includegraphics[width=1.0\linewidth]{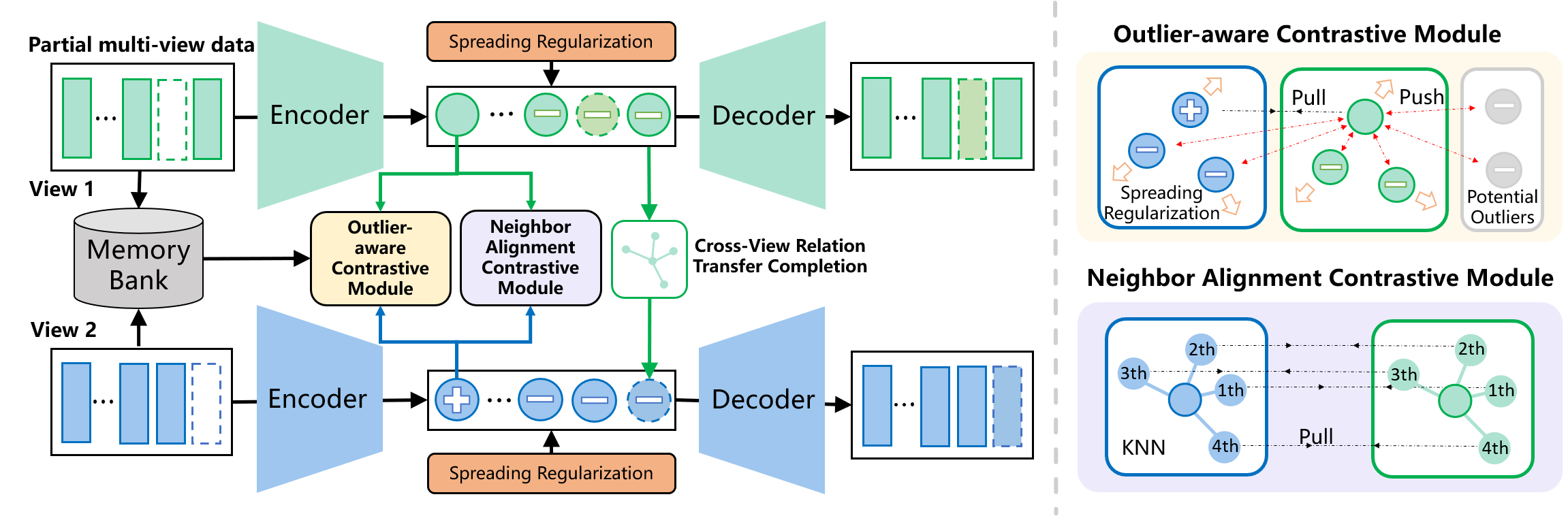}
   \caption{Overview of RCPMOD on bi-view data. Two key contrastive learning modules are applied on the latent space to promote the view consistency: (1) In outlier-aware contrastive module, potential class-related outliers are restored in a memory bank and used as additional negative samples. (2) In neighbor alignment contrastive module, the corresponding neighbors of a sample are aligned to learn the cross-view structural correlations. Moreover, we adopt a spreading regularization to prevent from overfitting on class-related outliers. The missing samples are imputed by the Cross-view Relation Transfer technique.}
   \label{fig:RCPMOD}
\end{figure*}

\subsection{Problem Setting}

Without loss of generality, we take bi-view data as an example. Consider a partial bi-view dataset $ \bm{X_{ms}}=\{\bm{X}^{(1)}_c,\bm{X}^{(2)}_c,\bm{X}^{(1)}_a,\bm{X}^{(2)}_b\}$ without labels, where \(\{\bm{X}^{(1)}_c,\bm{X}^{(2)}_c\}\) denote the instances presented in both views (also called complete data subset) with the size of $N$, \(\bm{X}^{(1)}_a\) and \(\bm{X}^{(2)}_b\) denote those presented in one view but missing in the other view. Let \(\bm{X}^{(1)} = \{\bm{X}^{(1)}_c, \bm{X}^{(1)}_a\} \) and \(\bm{X}^{(2)} = \{\bm{X}^{(2)}_c, \bm{X}^{(2)}_b\} \) be all the samples in view 1 and 2 with a size of $N_1$ and $N_2$, respectively. The data might simultaneously contain attribute/class/class-attribute outliers. Our target is designing a scoring function $\bm{s}(\cdot)$ to detect outliers in the data in an unsupervised manner, with a higher score indicating a larger probability to be abnormal. 



The proposed RCPMOD model consists of three modules as Fig.\ref{fig:RCPMOD} shows: (1) the outlier-aware contrastive module learns view consistency, (2) the neighbor alignment contrastive module learns shared local structural correlation across different views and (3) the spreading regularization module prevents overfitting to outliers.

\subsection{Outlier-aware Contrastive Learning}\label{sec:oa-con}
Following the convention of deep unsupervised multi-view learning \cite{COMPLETER,multi-level}, we adopt the autoencoder (AE) to learn the latent representation of each views. Let \( f^{(v)} \) and \( g^{(v)} \) denote the encoder and decoder for the \( v \)-th view, respectively.
To preserve the information of each view in the latent space, the AE reconstruction loss is defined as:
\begin{equation}
\mathcal{L}_{ar} = \frac{1}{2}\sum_{v=1}^2\sum_{i=1}^{N_v} \left\| \bm{x}^{(v)}_{i} - g^{(v)}\left(f^{(v)}\left(\bm{x}^{(v)}_{i}\right)\right) \right\|_2^2,
  \label{eq:1}
\end{equation}
where \( \bm{x}^{(v)}_{i}\) denotes the \( i \)-th sample in $\bm{X}^{(v)}$.  Hence, the latent representation of \( \bm{x}^{(v)}_{i}\) is given by
$
\bm{z}^{(v)}_{i} = f^{(v)}(\bm{x}^{(v)}_{i}).
$


To facilitate the multi-view outlier detection, we hope to learn a latent space in which inliers exhibit a large cross-view consistency while outliers (especially class-related ones) are quite the opposite.
In many recent multi-view learning methods \cite{multi-level,Reconsidering,viplref6}, the view-consistent information can be learned by contrastive learning. It pulls the embeddings of the same instance in each view close to each other while simultaneously pushing away those of different instances. For a given latent representation \(\bm{z}^{(1)}_{i}\), its counterpart in the other view \(\bm{z}^{(2)}_{i}\) is considered as the positive sample, and the rest samples in all views usually serve as negative samples. Using the cosine similarity $s(x, y)$, a typical multi-view contrastive loss could be formulated as:
\begin{equation}
\mathcal{L}_{con}=-\frac{1}{2}\sum_{m=1}^2\sum_{i=1}^{N} {{\log}\frac{e^{s(\bm{z}_{i}^{(m)},\bm{z}_{i}^{(m')})/\tau _F}}{\sum_{j=1}^{N}{\sum_{v=1}^2 {e^{s(\bm{z}_{i}^{(m)},\bm{z}_{j}^{(v)})/\tau _F}}}}},
\label{infoNCE}
\end{equation}
where $m'$ is the counterpart view of $m$ (\textit{e.g.,} $m'=2$ when $m=1$), and \(\tau _F\) denotes the temperature parameter.

However, the na\"ive contrastive loss overlooks the presence of outliers. Given that class-related outliers usually exhibit a large inconsistency among different views, arbitrarily pursuing the view-consistency for all the contaminated data will inevitably bias the latent space and then harm the learning. 
Recall that the contrastive loss fundamentally maximizes a lower bound on the mutual information between different views of an instance \cite{ContrCoding}, \textit{i.e.,} $I(\bm{z}^{(1)}, \bm{z}^{(2)})$. But in our case, we should only maximize the mutual information for inliers and keep the mutual information of outliers low to alleviate their negative impact. According to the characteristic of class-related outliers, we can naturally assume that the mutual information between different views of class-related outliers is upper-bounded:
\begin{equation}
    I(\bm{x}_o^{(1)}, \bm{x}_o^{(2)}) \leq \varepsilon,
\end{equation}
where \(\bm{x}_o^{(1)}\) and \(\bm{x}_o^{(2)}\) represent the different views of any arbitrary class-related outlier. Then we can find that a lower bound exists for the contrastive loss of such outliers, as shown in the following proposition. Due to space limitations, we leave the detailed proof in the supplementary materials.
\begin{proposition}
\label{lowerbound}
If \(I(\bm{x}_o^{(1)}, \bm{x}_o^{(2)}) \leq \varepsilon\), then the contrastive loss value of outlier instances is lower-bounded by \(\log (2N) - \varepsilon\).
\end{proposition}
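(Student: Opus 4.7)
The plan is to recognize Proposition~\ref{lowerbound} as essentially the contrapositive of the classical InfoNCE lower bound on mutual information \cite{ContrCoding}, specialized to a single outlier sample, combined with the data processing inequality in order to transfer the hypothesis from the raw inputs $\bm{x}_o^{(v)}$ to the latent codes $\bm{z}_o^{(v)}$. Morally, InfoNCE says a $K$-sample contrastive loss cannot drop below $\log K - I$; if $I$ is forced small (as it is for outliers by assumption), the loss is forced large, which is exactly what we want to conclude.

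First, I would isolate the contribution of a single outlier $o$ with anchor view $m$ from Eq.~\ref{infoNCE}:
$$\ell_o = -\log \frac{e^{s(\bm{z}_o^{(m)},\bm{z}_o^{(m')})/\tau_F}}{\sum_{j=1}^{N}\sum_{v=1}^{2} e^{s(\bm{z}_o^{(m)},\bm{z}_j^{(v)})/\tau_F}}.$$
The denominator contains the positive score together with $2N-1$ other candidates drawn from the latent marginal, so this term is precisely a $K$-sample InfoNCE estimator with $K=2N$ acting between the two views $\bm{z}_o^{(1)}$ and $\bm{z}_o^{(2)}$. I would then cite the standard InfoNCE--MI inequality of van den Oord et al.,
$$\mathbb{E}[\ell_o] \geq \log(2N) - I(\bm{z}_o^{(1)},\bm{z}_o^{(2)}),$$
whose short derivation identifies $e^{s/\tau_F}$ with an unnormalized density-ratio estimator and applies Jensen's inequality to the log-sum-exp; I would appeal to this known fact rather than re-derive it.

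Next, I would bridge latent and raw quantities via the data processing inequality. Since each $\bm{z}_o^{(v)}=f^{(v)}(\bm{x}_o^{(v)})$ is a deterministic function of $\bm{x}_o^{(v)}$, one obtains the Markov chain $\bm{z}_o^{(1)} - \bm{x}_o^{(1)} - \bm{x}_o^{(2)} - \bm{z}_o^{(2)}$, hence
$$I(\bm{z}_o^{(1)},\bm{z}_o^{(2)}) \leq I(\bm{x}_o^{(1)},\bm{x}_o^{(2)}) \leq \varepsilon.$$
Chaining this with the InfoNCE bound from the previous step immediately yields $\mathbb{E}[\ell_o] \geq \log(2N)-\varepsilon$, which is the claim.

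The main obstacle I anticipate is conceptual rather than computational: the InfoNCE inequality is naturally stated for the \emph{expected} loss over random draws of positives and negatives, whereas the proposition's phrasing ``the contrastive loss value of outlier instances'' reads pointwise. I would therefore be explicit in the write-up that the lower bound holds in expectation, equivalently as the large-batch Monte Carlo limit of the mini-batch empirical loss. A secondary wrinkle is that the denominator in Eq.~\ref{infoNCE} also contains the self-pair at $j=o,v=m$ (cosine similarity $1$); this only enlarges the denominator and therefore strengthens, rather than weakens, the stated lower bound, so it can be harmlessly absorbed in the argument.
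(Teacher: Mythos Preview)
Your proposal is correct and follows essentially the same route as the paper's own proof sketch: invoke the InfoNCE lower bound of \cite{ContrCoding} to get $\mathcal{L}_{con}^o \geq \log(2N) - I(\bm{z}_o^{(1)},\bm{z}_o^{(2)})$, then apply the data processing inequality to pass from latent codes to raw inputs and use the hypothesis $I(\bm{x}_o^{(1)},\bm{x}_o^{(2)})\leq\varepsilon$. Your additional remarks on the expectation-versus-pointwise interpretation and the harmless self-pair in the denominator are welcome clarifications that the paper's sketch leaves implicit.
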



\begin{proof}[Proof Sketch]
Following \cite{ContrCoding}, it is easy to show that:
\begin{equation}
    I(\bm{z}_o^{(1)}, \bm{z}_o^{(2)}) \geq \log (2N) - \mathcal{L}_{con}^o,
\end{equation}
where \(\mathcal{L}_{con}^o\) denotes the contrastive loss over all the outliers but the negative samples could be chosen from both inliers and outliers.

Meanwhile, by the data processing inequality, we have:
\begin{equation}
    I(\bm{z}_o^{(1)}, \bm{z}_o^{(2)})  \leq I(\bm{x}_o^{(1)}, \bm{x}_o^{(2)}) \leq \varepsilon.
\end{equation}
Combining the above results, we can obtain:
\begin{equation}
   \mathcal{L}_{con}^o \geq \log (2N) - I(\bm{z}_o^{(1)}, \bm{z}_o^{(2)}) \geq \log (2N) - \varepsilon.
\end{equation}

\end{proof}


The lower bound given in Proposition~\ref{lowerbound} suggests the feasibility of identifying outliers based on their loss values. Indeed, the contrastive loss value of each instance could also reflect how it is consistent across different views during the learning. Class-related outliers, being predominantly view-inconsistent, may exhibit higher loss values compared to inliers. In this sense, it is also natural to adopt this value as the indicator of such outliers. 
For computational convenience, here we simplify the calculation in Eq.\eqref{infoNCE}, and only adopt the cross-view cosine similarity of each view-complete instance, \textit{i.e.,} \(s(\bm{z}_i^{(1)}, \bm{z}_i^{(2)})\), as the criterion. To utilize these potential outliers, we propose employing a memory bank to store them. These potential outliers could be used as negative samples for each \(\bm{z}^{(v)}_{i}\). In practice, we select a fixed ratio \(\eta\) of instances with the smallest cross-view similarities in each mini-batch to form the memory bank \(\mathcal{M}\) with a size of \(N_{M}\). The memory bank is a first-in-first-out queue to keep the potential outliers up-to-date. By incorporating the newly formed negative pairs into Eq.\eqref{infoNCE}, we formulate the outlier-aware contrastive loss as:
\begin{equation}\label{eq:oa_loss}
\begin{split}
    \mathcal{L} _{oa} &=-\frac{1}{2}\sum_{m=1}^2\sum_{i=1}^{N} {{\log}\frac{e^{s(\bm{z}_{i}^{(m)},\bm{z}_{i}^{(m')})/\tau_F}}{\sum_{j=1}^{N}{\sum_{v=1}^2{e^{s(\bm{z}_{i}^{(m)},\bm{z}_{j}^{(v)})/\tau _F}}+P_M}}}, \\
P_M &=\sum_{v=1}^2\sum_{t=1}^{N_M}{e^{s(\bm{z}_{i}^{(m)},\bm{m}_{t}^{(v)})/\tau _F}},
\end{split}
\end{equation}
where \(\bm{m}_{t}^{(v)}\) is the $t$-th sample representation in $v$-th view in \(\mathcal{M}\). With this modified contrastive loss, class-related outliers are more distinguishable in view consistency.

Note that to accurately learn the latent space, the outlier-aware contrastive learning is only conducted on the view-complete instances at the beginning of training. After training for few epochs, we start to impute the missing view samples (the details will be introduced later) and then apply Eq.\eqref{eq:oa_loss} to both the complete subset and imputed data.

\begin{figure*}[t]
\centering
\begin{subfigure}[t]{0.24\textwidth}
    \centering
    \includegraphics[width=\textwidth]{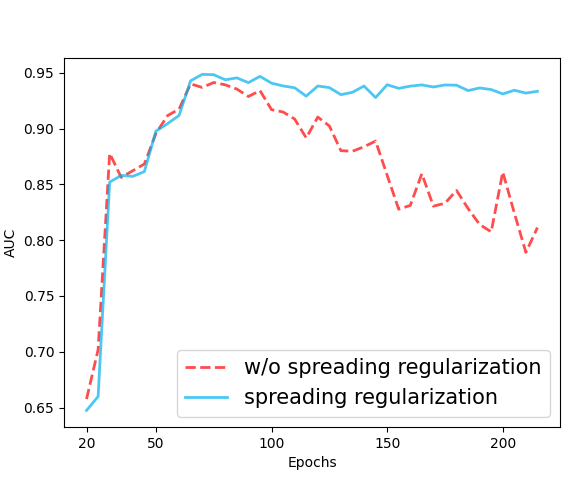}
    \caption{AUC comparison}
    \label{fig:KLlines1}
\end{subfigure}
\begin{subfigure}[t]{0.24\textwidth}
    \centering
    \includegraphics[width=\textwidth]{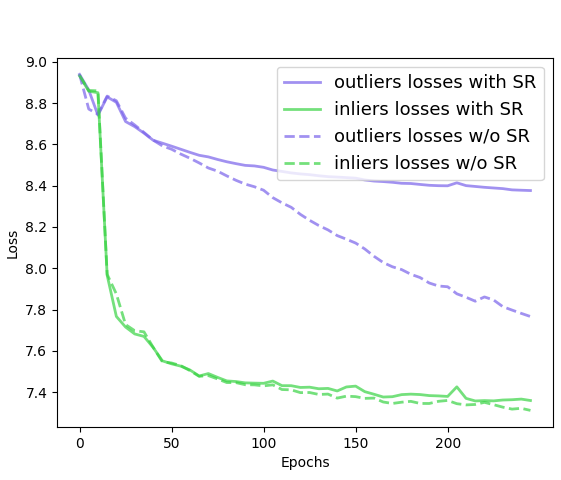}
    \caption{Loss comparison}
    \label{fig:KLLosses}
\end{subfigure}
\begin{subfigure}[t]{0.23\textwidth}
    \centering
    \includegraphics[width=\textwidth]{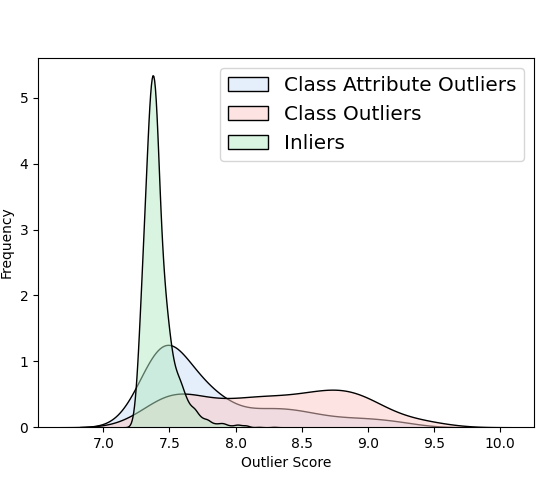}
    \caption{Score distribution without SR}
    \label{fig:SDwoSR}
\end{subfigure}
\begin{subfigure}[t]{0.23\textwidth}
    \centering
    \includegraphics[width=\textwidth]{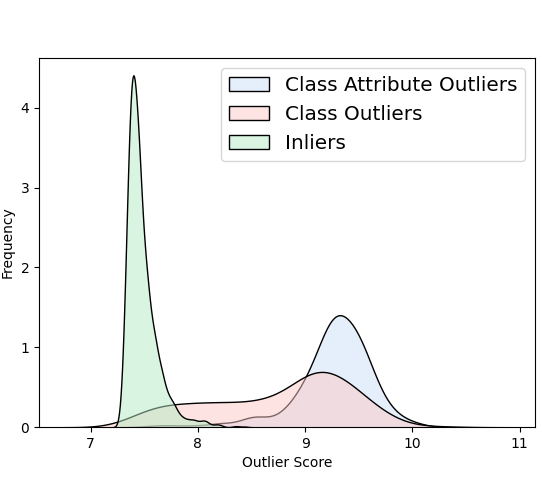}
    \caption{Score distribution with SR}
    \label{fig:SDwSR}
\end{subfigure}
\label{fig:KL}
\caption{(a) Comparison of the detection AUC with and without spreading regularization (SR) on SCENE15. (b) Comparison of the average loss value over inliers and outliers. (c)/(d) Outlier score distribution without/with SR.}
\end{figure*}

\subsection{Neighbor Alignment Contrastive Learning}
It is often assumed that data in different views share abundant local structural correlation. This information is apparently helpful in identifying class-related outliers since them usually exhibit inconsistent local structure across views. However, the standard contrastive learning objective is not able to exploit such information. To address this, we design a contrastive loss to explicitly learn the cross-view local neighborhood correlation by aligning the representations of $K$-nearest neighbors of an instance in different views. Specifically, for each sample \( \bm{z}^{(v)}_{i} \), we find its $K$-nearest neighbors ($K$-NNs) $\{ \bm{z}^{(v)}_{i,t} \}_{t=1}^K$ within the same view, where \( \bm{z}^{(v)}_{i,t} \) denote the \(t\)-th neighbor of  \( \bm{z}^{(v)}_{i} \). The neighbor alignment contrastive loss could then be formulated as: 
\begin{equation}
\begin{aligned}
    \mathcal{L}_{na}^{t} &=-\small{\frac{1}{2}}\sum_{m=1}^2\sum_{i=1}^{N}{{\log}\frac{e^{s(\bm{z}_{i,t}^{(m)},\bm{z}_{i,t}^{(m')})}}{\sum_{j=1}^{N}{\sum_{v=1}^2{e^{s(\bm{z}_{i,t}^{(m)}, \bm{z}_{j,t}^{\left( v \right)})}}}}},\\
    \mathcal{L}_{na} &=\small{\frac{1}{K}}\sum_{t=1}^K{\mathcal{L} _{na}^{t}}.
\end{aligned}    
\end{equation}
It is noteworthy that since the $K$-nearest neighbors are calculated within individual views, the neighbor sets $\{ \bm{z}^{(1)}_{i,t} \}_{t=1}^K$ and $\{ \bm{z}^{(2)}_{i,t} \}_{t=1}^K$ are not necessarily identical. As shown in the right panel of Fig. \ref{fig:RCPMOD}, the proposed loss encourages the corresponding nearest neighbors across different views of an instance to be close. By doing so over all $K$ nearest neighbors, the neighborhood structure of each instance is aligned across views, which further enhances the view-consistency.

Besides, in the beginning of training, the network usually cannot capture a stable latent structure in the data. Thus, the $K$-NNs in this stage are obtained based on the input features. When the latent structure becomes stable, the neighbors are then updated based on the newest latent features.

\subsection{Spreading Regularization} 

The above two contrastive losses equip our model with a strong ability to learn the view-consistent information in the presence of outliers, which is helpful for the detection. However, learning with contrastive losses may also incur some side effects. As the dotted red lines in Fig.~\ref{fig:KLlines1} show, although the detection performance increases rapidly at the beginning of training, it then tends to decrease after reaching the performance peak. Such an overfitting could be further demonstrated through the dashed lines in Fig.~\ref{fig:KLLosses}. Apparently, the cross-view consistency is much easier to achieve over inliers than outliers, so the contrastive loss of inliers decreases much faster. Unfortunately, as the learning goes on, the inliers are sufficiently view-consistent, turning the model's attention to promote the consistency over outliers. Accordingly, the loss of class-related outliers starts to decrease rapidly when the loss of inliers gradually becomes stable. On the other hand, due to the underlying clustering effect of contrastive losses \cite{gathering}, outliers might become still closer and closer to inliers in the latent space. This intrinsic trend cannot be completely alleviated by the outlier-aware design in Sec.~\ref{sec:oa-con} due to the limited volume of the outlier memory bank. It will also result in the outliers, especially attribute-related outliers, becoming increasingly indistinguishable.

To overcome this issue, we need to control the closeness for samples. We extend the KoLeo loss \cite{spreading} into the multi-view setting as a regularizer of contrastive losses:
\begin{equation}
    \mathcal{L}_{\text{KoLeo}} = -\frac{1}{2} \sum_{v=1}^2\sum_{i=1}^{N_v} \log(\delta_i^{(v)}),
\end{equation}
where
\begin{equation}
    \delta_i^{(v)} = \min_{j \neq i} \| \bm{z}_i^{(v)} - \bm{z}_j^{(v)} \|.
\end{equation}
Here the closest points in each view are pushed away, which continuously scatters the latent representations. Following \cite{spreading}, a rank preserving loss is also adopted to prevent the KoLeo loss from undermining the latent structure:
\begin{equation}
    \mathcal{L}_{\text{rank}} = -\frac{1}{2} \sum_{v=1}^2\sum_{i=1}^{N_v} \max \left(0,\lVert\bm{z}_i^{(v)}  - \bm{z}_i^{(v)+}  \rVert_2 - \lVert \bm{z}_i^{(v)}  - \bm{z}_i^{(v)-}  \rVert_2 \right),
\end{equation}
where the positive sample \(\bm{z}_i^{(v)+}\) is randomly chosen among the \(k_{pos}\) nearest neighbors of \(\bm{z}_i^{(v)}\) and the negative sample \(\bm{z}_i^{(v)-}\) is the \(k_{neg}\)-th neighbor. \(k_{neg}\) is usually set as a much larger value than \(k_{pos}\) so that \(\bm{z}_i^{(v)+}\) and \(\bm{z}_i^{(v)-}\) can be near and far from \(\bm{z}_i^{(v)}\), respectively. This loss mainly focuses on preserving the neighborhood structure in each view, so that the KoLeo loss will not break the data structure.

Thus the spreading regularization loss can be formulated as:
\begin{equation}
\mathcal{L}_{sr}=\mathcal{L}_{\text{KoLeo}}+\mathcal{L}_{\text{rank}}.
\end{equation}
With the help of this regularization, the detection performance could be significantly stabilized and the overfitting on outliers are prevented, as shown by the solid lines in Fig.~\ref{fig:KLlines1} and \ref{fig:KLLosses}. Furthermore, the outlier score distribution before and after adding spreading regularization in Fig.~\ref{fig:SDwoSR} and \ref{fig:SDwSR} also demonstrates the effect of this loss. We can find that the overlapping between inliers and outliers is reduced with spreading regularization.



Putting all together, the overall learning objective of RCPMOD can be formulated as:
\begin{equation}
\mathcal{L} = \mathcal{L}_{ar} +\lambda_1\mathcal{L}_{oa} + \lambda_2\mathcal{L}_{na} + \mu\mathcal{L}_{sr},
\end{equation}
where \(\lambda_1\), \(\lambda_2\), \(\mu\) are balancing parameters.
This framework could be easily extend to the case with more than two views similar to existing multi-view learning method such as \cite{multi-level}.

\subsection{Outlier Scoring}
The design for a proper outlier scoring function should consider the characteristics of the three kinds of outliers. In our framework, we mainly have the following consideration:
\begin{itemize}
    \item \textbf{Attribute outliers}: Harder to reconstruct due to their abnormality in all views. Thus, a large reconstruction error indicates an attribute outlier.
    \item \textbf{Class outliers}: View-inconsistent as discussed in Sec.~\ref{sec:oa-con}. Considering the outlier-aware contrastive loss could enhance the view-consistency of normal instances, a large contrastive loss indicates a class outlier.
    \item \textbf{Class-attribute outliers}: Exhibiting traits of both attribute and class outliers. Thus, such outliers could be indicated by a combination of reconstruction error and contrastive loss.
\end{itemize}

Then we could obtain the corresponding scoring function as:
\begin{equation}
\begin{aligned}
    s(\bm{x}_i) &= s_r(\bm{x}_i) + s_c(\bm{x}_i), 
\end{aligned}
\end{equation}
where
\begin{equation}
\begin{aligned}
    s_r(\bm{x}_i) &= \frac{1}{2}\sum_{v=1}^2 \left\lVert \bm{x}^{(v)}_i - \hat{\bm{x}}^{(v)}_i \right\rVert^2_2, \\
    s_c(\bm{x}_i) &= -\frac{1}{2}\sum_{m=1}^2 {{\log}\frac{e^{d(\bm{z}_{i}^{(m)},\bm{z}_{i}^{(m')})/\tau _F}}{\sum_{j=1}^{N}{\sum_{v=1}^2{e^{d(\bm{z}_{i}^{(m)},\bm{z}_{j}^{(v)})/\tau _F}}}}}.
\end{aligned}
\end{equation}
Here $s_r(\bm{x}_i)$ is the reconstruction error across all views, which will be large for attribute outliers; $s_c(\bm{x}_i)$ is the contrastive loss value and should be large for class outliers. For partial data, $s_c(\bm{x}_i)$ is calculated after imputation. Meanwhile, class-attribute outliers will also have large $s(\bm{x}_i)$s. What's more, the inliers are easy to reconstruct and their view-consistency should be high, resulting in a small $s(\bm{x}_i)$.

\noindent\textbf{Missing Sample Imputation.} In our work, the cross-view consistency learning and missing view imputation procedures collaborate. With the aligned neighborhood structure, our method can easily recover the representation of missing samples with the Cross-view Relation Transfer technique \cite{Crossview}. The core idea is to impute the missing view based on the nearest neighbors in other views. Taking the recovery of \(\bm{z}^{(1)}_{b,i}\) as an example. We first obtain the $K$ nearest neighbors of \(\bm{z}^{(2)}_{b,i}\) in view 2 and find their counterparts in view 1. Since some neighbor counterparts may be missing in view 1, we ignore these missing samples and take the average of the rest complete ones as the recovered latent representation \(\hat{\bm{z}}^{(1)}_{b,i}\).  With more accurate imputation, the overall loss is applied on completed data, enhancing the utilization for partial data and further improving imputation.

\section{Experiments}


\subsection{Experimental Settings}
\begin{table}[t]\small
  \centering
  \caption{ Data statistics of the benchmark datasets.}
    \begin{tabular}{cccc}
    \toprule
    \multicolumn{1}{l}{Datasets} & \multicolumn{1}{l}{Instances} & \multicolumn{1}{l}{Views} & \multicolumn{1}{l}{Classes} \\
    \midrule
    BDGP  & 2500  & 2     & 5 \\
    SCENE15 & 4568  & 3     & 15 \\
    LandUse21 & 2100  & 3     & 21 \\
    Fashion & 10000  & 3     & 10 \\
    \bottomrule
    \end{tabular}%
  \label{Datasets}%
\end{table}%

\begin{table}[t]\small
\centering
\caption{Different combinations of outlier ratios.}
\begin{tabular}{c|cccccc}
\toprule
id & 1 & 2 & 3 & 4 & 5 & 6 \\ \midrule
\( \rho_1 \) & 0.02 & 0.02 & 0.05 & 0.05 & 0.08 & 0.08 \\
\( \rho_2 \) & 0.05 & 0.08 & 0.02 & 0.08 & 0.02 & 0.05 \\
\( \rho_3 \) & 0.08 & 0.05 & 0.08 & 0.02 & 0.05 & 0.02 \\ \bottomrule
\end{tabular}
\label{idtable}
\end{table}

\begin{table*}[t]
\centering
\caption{The detection AUC (\%) on different datasets under the missing rates of 0 and 0.15. The value marked in "\textcolor[rgb]{ 0.9569, 0.5373, 0.4353}{red}" holds the highest value, and "\textcolor[rgb]{ 0.3843, 0.6588, 0.8706}{blue}" holds the second highest.}
\footnotesize

\begin{subtable}{.47\linewidth}
\centering
  \caption{AUC on BDGP and SCENE15 with no missing view}
    \begin{tabularx}{\linewidth}{
      >{\hsize=.05\hsize}X|
      >{\hsize=.2375\hsize}X
      >{\hsize=.2375\hsize}X
      >{\hsize=.2375\hsize}X
      >{\hsize=.2375\hsize}X
      >{\hsize=.2375\hsize}X
      >{\hsize=.2375\hsize}X}
        \toprule
          & \multicolumn{1}{c}{CL} & \multicolumn{1}{c}{MODDIS} & \multicolumn{1}{c}{NCMOD} & \multicolumn{1}{c}{SRLSP} & \multicolumn{1}{c}{MODGD} & \multicolumn{1}{c}{Ours} \\
    \midrule
    B1    & 49.84±1.53 & 88.64±0.92 & 86.03±1.22 & \textcolor[rgb]{0.3843, 0.6588, 0.8706}{91.29±1.22} & 76.69±1.56 & \textcolor[rgb]{ 0.9569, 0.5373, 0.4353}{\textbf{97.05±0.18}} \\
    B2    & 52.15±1.23 & 80.85±1.23 & 77.18±1.10 & \textcolor[rgb]{0.3843, 0.6588, 0.8706}{85.14±0.91} & 69.62±1.62 & \textcolor[rgb]{ 0.9569, 0.5373, 0.4353}{\textbf{95.67±0.65}} \\
    B3    & 47.28±1.80 & 95.58±0.51 & 94.05±0.78 & \textcolor[rgb]{0.3843, 0.6588, 0.8706}{96.62±0.44} & 86.13±1.86 & \textcolor[rgb]{ 0.9569, 0.5373, 0.4353}{\textbf{95.80±0.72}} \\
    B4    & 51.33±0.79 & 81.45±1.31 & 78.29±0.74 & \textcolor[rgb]{0.3843, 0.6588, 0.8706}{85.38±0.78} & 71.53±1.41 & \textcolor[rgb]{ 0.9569, 0.5373, 0.4353}{\textbf{91.30±0.48}} \\
    B5    & 50.17±2.49 & \textcolor[rgb]{0.3843, 0.6588, 0.8706}{95.83±0.45} & 94.01±1.15 & \textcolor[rgb]{ 0.9569, 0.5373, 0.4353}{\textbf{96.66±0.42}} & 88.52±0.74 & 95.58±0.54 \\
    B6    & 51.33±3.22 & 88.27±0.71 & 86.80±1.83 & \textcolor[rgb]{0.3843, 0.6588, 0.8706}{91.29±1.28} & 82.09±1.14 & \textcolor[rgb]{ 0.9569, 0.5373, 0.4353}{\textbf{92.18±1.00}} \\
    \midrule
    S1    & 52.25±4.89 & 92.24±0.40 & 91.12±1.09 & \textcolor[rgb]{0.3843, 0.6588, 0.8706}{95.89±0.21} & 85.30±1.16 & \textcolor[rgb]{ 0.9569, 0.5373, 0.4353}{\textbf{97.67±0.41}} \\
    S2    & 54.73±4.18 & 87.40±0.67 & 82.78±1.20 & \textcolor[rgb]{0.3843, 0.6588, 0.8706}{93.32±0.49} & 76.29±0.78 & \textcolor[rgb]{ 0.9569, 0.5373, 0.4353}{\textbf{95.03±0.46}} \\
    S3    & 53.33±3.41 & \textcolor[rgb]{0.3843, 0.6588, 0.8706}{95.50±0.40} & 95.08±0.38 & 92.98±0.37 & 93.83±0.45 & \textcolor[rgb]{ 0.9569, 0.5373, 0.4353}{\textbf{97.89±0.57}} \\
    S4    & 53.55±3.89 & 87.27±0.88 & 83.61±2.88 & \textcolor[rgb]{0.3843, 0.6588, 0.8706}{93.20±0.45} & 76.39±0.98 & \textcolor[rgb]{ 0.9569, 0.5373, 0.4353}{\textbf{94.61±0.69}} \\
    S5    & 51.47±3.11 & 94.54±2.35 & \textcolor[rgb]{0.3843, 0.6588, 0.8706}{95.98±0.53} & 93.80±0.33 & 93.68±0.32 & \textcolor[rgb]{ 0.9569, 0.5373, 0.4353}{\textbf{97.36±0.31}} \\
    S6    & 52.20±2.85 & 92.03±0.60 & 89.44±1.32 & \textcolor[rgb]{0.3843, 0.6588, 0.8706}{95.85±0.27} & 85.19±1.28 & \textcolor[rgb]{ 0.9569, 0.5373, 0.4353}{\textbf{97.02±0.52}} \\
    \bottomrule
    \end{tabularx}%
  \label{tab:addlabel}%
\end{subtable}%
\hspace{25pt}
\begin{subtable}{.47\linewidth}
\centering
  \caption{AUC on BDGP and SCENE15 with a missing rate of 0.15}
    \begin{tabularx}{\linewidth}{
      >{\hsize=.05\hsize}X|
      >{\hsize=.2375\hsize}X
      >{\hsize=.2375\hsize}X
      >{\hsize=.2375\hsize}X
      >{\hsize=.2375\hsize}X
      >{\hsize=.2375\hsize}X
      >{\hsize=.2375\hsize}X}
        \toprule
          & \multicolumn{1}{c}{CL} & \multicolumn{1}{c}{MODDIS} & \multicolumn{1}{c}{NCMOD} & \multicolumn{1}{c}{SRLSP} & \multicolumn{1}{c}{MODGD} & \multicolumn{1}{c}{Ours} \\
    \midrule
    B1    & 50.37±1.54 & 87.97±1.01 & 86.00±0.76 & \textcolor[rgb]{0.3843, 0.6588, 0.8706}{88.58±1.25} & 75.11±1.82 & \textcolor[rgb]{ 0.9569, 0.5373, 0.4353}{\textbf{97.09±0.27}} \\
    B2    & 49.11±2.01 & 80.77±0.33 & 78.16±0.65 & \textcolor[rgb]{0.3843, 0.6588, 0.8706}{82.71±1.22} & 69.47±1.81 & \textcolor[rgb]{ 0.9569, 0.5373, 0.4353}{\textbf{95.27±0.74}} \\
    B3    & 50.21±1.81 & \textcolor[rgb]{0.3843, 0.6588, 0.8706}{95.31±0.33} & 93.80±0.34 & 95.22±0.81 & 83.76±1.04 & \textcolor[rgb]{ 0.9569, 0.5373, 0.4353}{\textbf{96.79±0.59}} \\
    B4    & 49.86±2.35 & 81.33±1.28 & 79.37±0.66 & \textcolor[rgb]{0.3843, 0.6588, 0.8706}{83.74±0.59} & 72.10±1.17 & \textcolor[rgb]{ 0.9569, 0.5373, 0.4353}{\textbf{89.34±2.21}} \\
    B5    & 47.24±5.33 & 95.32±0.29 & 94.42±0.41 & \textcolor[rgb]{0.3843, 0.6588, 0.8706}{95.75±0.46} & 88.35±0.80 & \textcolor[rgb]{ 0.9569, 0.5373, 0.4353}{\textbf{95.90±0.31}} \\
    B6    & 47.02±4.59 & 88.26±0.57 & 88.41±0.55 & \textcolor[rgb]{0.3843, 0.6588, 0.8706}{89.77±0.64} & 82.24±0.48 & \textcolor[rgb]{ 0.9569, 0.5373, 0.4353}{\textbf{91.80±1.09}} \\
    \midrule
    S1    & 48.95±3.66 & 92.10±0.99 & 87.66±0.72 & \textcolor[rgb]{0.3843, 0.6588, 0.8706}{95.22±0.69} & 83.40±0.59 & \textcolor[rgb]{ 0.9569, 0.5373, 0.4353}{\textbf{96.31±0.23}} \\
    S2    & 49.81±4.41 & 86.94±0.41 & 82.04±2.09 & \textcolor[rgb]{0.3843, 0.6588, 0.8706}{92.38±0.37} & 74.07±1.30 & \textcolor[rgb]{ 0.9569, 0.5373, 0.4353}{\textbf{96.39±0.47}} \\
    S3    & 48.84±3.19 & \textcolor[rgb]{0.3843, 0.6588, 0.8706}{96.08±0.36} & 94.66±0.61 & 93.75±0.36 & 93.26±0.42 & \textcolor[rgb]{ 0.9569, 0.5373, 0.4353}{\textbf{97.08±0.30}} \\
    S4    & 48.55±3.70 & 87.40±0.91 & 81.29±0.84 & \textcolor[rgb]{0.3843, 0.6588, 0.8706}{92.68±0.57} & 74.66±1.14 & \textcolor[rgb]{ 0.9569, 0.5373, 0.4353}{\textbf{93.95±1.44}} \\
    S5    & 50.16±2.12 & \textcolor[rgb]{0.3843, 0.6588, 0.8706}{95.81±0.23} & 95.02±0.17 & 94.26±0.27 & 93.54±0.40 & \textcolor[rgb]{ 0.9569, 0.5373, 0.4353}{\textbf{96.37±0.12}} \\
    S6    & 49.76±2.38 & 92.57±0.85 & 88.86±1.40 & \textcolor[rgb]{0.3843, 0.6588, 0.8706}{95.75±0.84} & 84.02±0.39 & \textcolor[rgb]{ 0.9569, 0.5373, 0.4353}{\textbf{96.40±0.43}} \\
    \bottomrule
    \end{tabularx}%
  \label{tab:addlabel}%
\end{subtable}
\begin{subtable}{.47\linewidth}
  \centering
  \caption{AUC on Fashion and LandUse21 with no missing view}
    \begin{tabularx}{\linewidth}{
      >{\hsize=.05\hsize}X|
      >{\hsize=.2375\hsize}X
      >{\hsize=.2375\hsize}X
      >{\hsize=.2375\hsize}X
      >{\hsize=.2375\hsize}X
      >{\hsize=.2375\hsize}X
      >{\hsize=.2375\hsize}X}
    \toprule
          & \multicolumn{1}{c}{CL} & \multicolumn{1}{c}{MODDIS} & \multicolumn{1}{c}{NCMOD} & \multicolumn{1}{c}{SRLSP} & \multicolumn{1}{c}{MODGD} & \multicolumn{1}{c}{Ours} \\
    \midrule
    F1    & 47.35±3.30 & 91.68±0.46 & 90.68±0.39 & \textcolor[rgb]{ 0.3843, 0.6588, 0.8706}{93.22±0.40} & 84.09±0.41 & \textcolor[rgb]{ 0.9569, 0.5373, 0.4353}{\textbf{97.63±0.09}} \\
    F2    & 48.19±2.87 & 86.04±0.51 & 86.39±0.41 & \textcolor[rgb]{ 0.3843, 0.6588, 0.8706}{88.52±0.52} & 74.23±0.33 & \textcolor[rgb]{ 0.9569, 0.5373, 0.4353}{\textbf{96.55±0.36}} \\
    F3    & 47.78±6.16 & 96.44±0.20 & 96.20±0.35 & \textcolor[rgb]{ 0.3843, 0.6588, 0.8706}{97.52±0.16} & 93.54±0.19 & \textcolor[rgb]{ 0.9569, 0.5373, 0.4353}{\textbf{98.61±0.17}} \\
    F4    & 48.00±3.81 & 86.57±0.37 & 86.92±0.57 & \textcolor[rgb]{ 0.3843, 0.6588, 0.8706}{88.59±0.58} & 74.35±0.49 & \textcolor[rgb]{ 0.9569, 0.5373, 0.4353}{\textbf{96.09±0.42}} \\
    F5    & 45.87±8.04 & 96.75±0.12 & 96.70±0.18 & \textcolor[rgb]{ 0.3843, 0.6588, 0.8706}{97.52±0.18} & 93.57±0.14 & \textcolor[rgb]{ 0.9569, 0.5373, 0.4353}{\textbf{98.34±0.19}} \\
    F6    & 47.03±5.62 & 92.07±0.45 & 92.09±0.48 & \textcolor[rgb]{ 0.3843, 0.6588, 0.8706}{93.29±0.43} & 84.15±0.40 & \textcolor[rgb]{ 0.9569, 0.5373, 0.4353}{\textbf{96.67±0.32}} \\
    \midrule
    L1    & 54.50±10.52 & 91.34±0.43 & 86.77±0.76 & \textcolor[rgb]{ 0.3843, 0.6588, 0.8706}{93.88±0.71} & 89.15±0.38 & \textcolor[rgb]{ 0.9569, 0.5373, 0.4353}{\textbf{98.02±0.36}} \\
    L2    & 53.97±10.14 & 85.41±1.06 & 78.18±0.98 & \textcolor[rgb]{ 0.3843, 0.6588, 0.8706}{89.89±0.58} & 82.38±1.32 & \textcolor[rgb]{ 0.9569, 0.5373, 0.4353}{\textbf{97.76±0.50}} \\
    L3    & 53.34±9.93 & 96.52±0.47 & 94.52±0.73 & \textcolor[rgb]{ 0.3843, 0.6588, 0.8706}{97.82±0.44} & 95.66±0.56 & \textcolor[rgb]{ 0.9569, 0.5373, 0.4353}{\textbf{98.94±0.21}} \\
    L4    & 53.39±8.59 & 85.61±0.79 & 78.40±0.86 & \textcolor[rgb]{ 0.3843, 0.6588, 0.8706}{89.81±0.60} & 82.18±1.33 & \textcolor[rgb]{ 0.9569, 0.5373, 0.4353}{\textbf{97.36±0.24}} \\
    L5    & 53.77±7.92 & 96.56±0.51 & 95.09±1.57 & \textcolor[rgb]{ 0.3843, 0.6588, 0.8706}{97.85±0.46} & 95.63±0.44 & \textcolor[rgb]{ 0.9569, 0.5373, 0.4353}{\textbf{99.06±0.29}} \\
    L6    & 52.95±9.40 & 91.16±0.58 & 85.65±0.46 & \textcolor[rgb]{ 0.3843, 0.6588, 0.8706}{93.88±0.76} & 89.23±0.65 & \textcolor[rgb]{ 0.9569, 0.5373, 0.4353}{\textbf{97.61±0.79}} \\
    \bottomrule
    \end{tabularx}%
  \label{tab:addlabel}%
\end{subtable}%
\hspace{25pt}
\begin{subtable}{.47\linewidth}
  \centering
  \caption{AUC on Fashion and LandUse21 with a missing rate of 0.15}
   \begin{tabularx}{\linewidth}{
      >{\hsize=.05\hsize}X|
      >{\hsize=.2375\hsize}X
      >{\hsize=.2375\hsize}X
      >{\hsize=.2375\hsize}X
      >{\hsize=.2375\hsize}X
      >{\hsize=.2375\hsize}X
      >{\hsize=.2375\hsize}X}
    \toprule
          & \multicolumn{1}{c}{CL} & \multicolumn{1}{c}{MODDIS} & \multicolumn{1}{c}{NCMOD} & \multicolumn{1}{c}{SRLSP} & \multicolumn{1}{c}{MODGD} & \multicolumn{1}{c}{Ours} \\
    \midrule
    F1    & 46.37±5.68 & 90.93±0.35 & 91.62±0.24 & \textcolor[rgb]{ 0.3843, 0.6588, 0.8706}{92.32±0.17} & 83.58±0.18 & \textcolor[rgb]{ 0.9569, 0.5373, 0.4353}{\textbf{97.70±0.07}} \\
    F2    & 47.62±4.05 & 86.76±0.86 & 87.05±0.34 & \textcolor[rgb]{ 0.3843, 0.6588, 0.8706}{88.31±0.44} & 75.07±1.40 & \textcolor[rgb]{ 0.9569, 0.5373, 0.4353}{\textbf{96.66±0.28}} \\
    F3    & 45.30±8.86 & 96.14±0.38 & 94.79±0.35 & \textcolor[rgb]{ 0.3843, 0.6588, 0.8706}{96.85±0.39} & 92.01±1.20 & \textcolor[rgb]{ 0.9569, 0.5373, 0.4353}{\textbf{98.55±0.13}} \\
    F4    & 46.83±4.90 & 87.38±0.31 & 87.90±0.33 & \textcolor[rgb]{ 0.3843, 0.6588, 0.8706}{88.55±0.57} & 74.68±0.44 & \textcolor[rgb]{ 0.9569, 0.5373, 0.4353}{\textbf{96.04±0.17}} \\
    F5    & 44.59±10.48 & 96.23±0.52 & 96.08±0.42 & \textcolor[rgb]{ 0.3843, 0.6588, 0.8706}{96.95±0.32} & 92.62±1.87 & \textcolor[rgb]{ 0.9569, 0.5373, 0.4353}{\textbf{98.29±0.14}} \\
    F6    & 45.07±9.30 & 92.39±0.34 & 92.09±0.78 & \textcolor[rgb]{ 0.3843, 0.6588, 0.8706}{93.03±0.21} & 82.62±1.92 & \textcolor[rgb]{ 0.9569, 0.5373, 0.4353}{\textbf{97.01±0.46}} \\
    \midrule
    L1    & 50.82±9.81 & 90.72±0.62 & 85.47±0.35 & \textcolor[rgb]{ 0.3843, 0.6588, 0.8706}{93.04±0.79} & 87.39±0.54 & \textcolor[rgb]{ 0.9569, 0.5373, 0.4353}{\textbf{97.05±0.35}} \\
    L2    & 50.23±9.28 & 86.05±0.97 & 77.70±0.88 & \textcolor[rgb]{ 0.3843, 0.6588, 0.8706}{89.43±0.98} & 79.90±1.22 & \textcolor[rgb]{ 0.9569, 0.5373, 0.4353}{\textbf{96.78±0.39}} \\
    L3    & 50.62±9.00 & 96.16±0.19 & 93.78±0.30 & \textcolor[rgb]{ 0.3843, 0.6588, 0.8706}{97.27±0.39} & 94.84±0.34 & \textcolor[rgb]{ 0.9569, 0.5373, 0.4353}{\textbf{97.37±0.60}} \\
    L4    & 51.25±9.38 & 86.25±1.16 & 77.77±1.14 & \textcolor[rgb]{ 0.3843, 0.6588, 0.8706}{89.92±1.14} & 80.51±1.36 & \textcolor[rgb]{ 0.9569, 0.5373, 0.4353}{\textbf{95.67±1.36}} \\
    L5    & 51.92±9.51 & 96.26±0.21 & 94.35±0.50 & \textcolor[rgb]{ 0.3843, 0.6588, 0.8706}{97.25±0.39} & 94.91±0.49 & \textcolor[rgb]{ 0.9569, 0.5373, 0.4353}{\textbf{98.32±0.35}} \\
    L6    & 50.45±7.88 & 91.10±0.87 & 85.80±0.43 & \textcolor[rgb]{ 0.3843, 0.6588, 0.8706}{93.50±1.05} & 88.15±0.91 & \textcolor[rgb]{ 0.9569, 0.5373, 0.4353}{\textbf{97.03±0.32}} \\
    \bottomrule
    \end{tabularx}%
  \label{tab:addlabel}%
\end{subtable}
\label{tab:aucs1}
\end{table*}

\begin{table*}[t]
\centering
\caption{The detection AUC (\%) on different datasets under the missing rates of 0.3 and 0.45.}
\footnotesize

\begin{subtable}{.47\linewidth}
\centering
  \caption{AUC on BDGP and SCENE15 with a missing rate of 0.3}
    \begin{tabularx}{\linewidth}{
      >{\hsize=.05\hsize}X|
      >{\hsize=.2375\hsize}X
      >{\hsize=.2375\hsize}X
      >{\hsize=.2375\hsize}X
      >{\hsize=.2375\hsize}X
      >{\hsize=.2375\hsize}X
      >{\hsize=.2375\hsize}X}
    \toprule
          & \multicolumn{1}{c}{CL} & \multicolumn{1}{c}{MODDIS} & \multicolumn{1}{c}{NCMOD} & \multicolumn{1}{c}{SRLSP} & \multicolumn{1}{c}{MODGD} & \multicolumn{1}{c}{Ours} \\
    \midrule
    B1    & 50.31±3.06 & \textcolor[rgb]{0.3843, 0.6588, 0.8706}{88.02±0.66} & 85.90±0.58 & 87.22±0.62 & 72.20±1.57 & \textcolor[rgb]{ 0.9569, 0.5373, 0.4353}{\textbf{96.97±0.46}} \\
    B2    & 50.72±3.45 & 81.20±0.89 & 79.25±1.17 & \textcolor[rgb]{0.3843, 0.6588, 0.8706}{82.09±0.81} & 66.15±1.23 & \textcolor[rgb]{ 0.9569, 0.5373, 0.4353}{\textbf{95.17±0.83}} \\
    B3    & 49.34±1.86 & \textcolor[rgb]{0.3843, 0.6588, 0.8706}{95.35±0.78} & 94.75±0.66 & 93.39±1.68 & 80.95±1.90 & \textcolor[rgb]{ 0.9569, 0.5373, 0.4353}{\textbf{96.83±0.32}} \\
    B4    & 49.92±1.90 & 81.73±0.62 & 80.75±0.19 & \textcolor[rgb]{0.3843, 0.6588, 0.8706}{83.93±1.23} & 70.01±0.97 & \textcolor[rgb]{ 0.9569, 0.5373, 0.4353}{\textbf{89.48±3.08}} \\
    B5    & 48.69±3.53 & \textcolor[rgb]{0.3843, 0.6588, 0.8706}{95.62±0.23} & 94.18±0.79 & 94.79±0.62 & 86.07±1.03 & \textcolor[rgb]{ 0.9569, 0.5373, 0.4353}{\textbf{96.69±0.55}} \\
    B6    & 48.03±2.85 & 88.32±0.43 & 87.46±0.52 & \textcolor[rgb]{0.3843, 0.6588, 0.8706}{89.57±0.63} & 80.87±1.49 & \textcolor[rgb]{ 0.9569, 0.5373, 0.4353}{\textbf{92.30±1.14}} \\
    \midrule
    S1    & 47.68±2.70 & 91.39±0.54 & 87.88±1.29 & \textcolor[rgb]{0.3843, 0.6588, 0.8706}{94.07±0.69} & 81.66±0.41 & \textcolor[rgb]{ 0.9569, 0.5373, 0.4353}{\textbf{96.06±0.61}} \\
    S2    & 48.01±2.33 & 86.90±1.24 & 81.36±1.36 & \textcolor[rgb]{0.3843, 0.6588, 0.8706}{90.67±1.74} & 74.14±4.52 & \textcolor[rgb]{ 0.9569, 0.5373, 0.4353}{\textbf{96.10±0.27}} \\
    S3    & 46.81±2.47 & 94.59±0.88 & \textcolor[rgb]{0.3843, 0.6588, 0.8706}{95.69±0.82} & 93.98±0.38 & 92.21±0.16 & \textcolor[rgb]{ 0.9569, 0.5373, 0.4353}{\textbf{96.21±0.77}} \\
    S4    & 48.07±2.69 & 87.65±1.42 & 81.59±1.12 & \textcolor[rgb]{0.3843, 0.6588, 0.8706}{91.33±1.98} & 74.73±2.44 & \textcolor[rgb]{ 0.9569, 0.5373, 0.4353}{\textbf{94.40±0.66}} \\
    S5    & 47.97±1.83 & 94.39±1.42 & 95.29±0.43 & \textcolor[rgb]{0.3843, 0.6588, 0.8706}{94.51±0.43} & 93.18±0.20 & \textcolor[rgb]{ 0.9569, 0.5373, 0.4353}{\textbf{96.74±0.43}} \\
    S6    & 48.50±1.25 & 92.51±0.72 & 89.85±0.57 & \textcolor[rgb]{0.3843, 0.6588, 0.8706}{94.58±1.25} & 83.48±0.38 & \textcolor[rgb]{ 0.9569, 0.5373, 0.4353}{\textbf{95.69±0.33}} \\
    \bottomrule
    \end{tabularx}%
  \label{tab:addlabel}%
\end{subtable}%
\hspace{25pt}
\begin{subtable}{.47\linewidth}
\centering
  \caption{AUC on BDGP and SCENE15 with a missing rate of 0.45}
    \begin{tabularx}{\linewidth}{
      >{\hsize=.05\hsize}X|
      >{\hsize=.2375\hsize}X
      >{\hsize=.2375\hsize}X
      >{\hsize=.2375\hsize}X
      >{\hsize=.2375\hsize}X
      >{\hsize=.2375\hsize}X
      >{\hsize=.2375\hsize}X}
        \toprule
          & \multicolumn{1}{c}{CL} & \multicolumn{1}{c}{MODDIS} & \multicolumn{1}{c}{NCMOD} & \multicolumn{1}{c}{SRLSP} & \multicolumn{1}{c}{MODGD} & \multicolumn{1}{c}{Ours} \\
    \midrule
    B1    & 50.28±3.62 & \textcolor[rgb]{0.3843, 0.6588, 0.8706}{87.24±0.72} & 86.31±0.56 & 85.81±0.79 & 69.15±1.83 & \textcolor[rgb]{ 0.9569, 0.5373, 0.4353}{\textbf{95.97±0.40}} \\
    B2    & 51.09±4.33 & 81.86±3.84 & 78.48±0.89 & \textcolor[rgb]{0.3843, 0.6588, 0.8706}{82.46±4.16} & 67.17±5.47 & \textcolor[rgb]{ 0.9569, 0.5373, 0.4353}{\textbf{95.01±0.26}} \\
    B3    & 51.25±2.43 & \textcolor[rgb]{0.3843, 0.6588, 0.8706}{95.01±0.44} & 94.88±0.70 & 93.61±0.85 & 78.42±2.16 & \textcolor[rgb]{ 0.9569, 0.5373, 0.4353}{\textbf{97.03±0.53}} \\
    B4    & 49.70±1.87 & 80.50±0.75 & 79.82±1.02 & \textcolor[rgb]{0.3843, 0.6588, 0.8706}{82.24±1.04} & 67.82±1.27 & \textcolor[rgb]{ 0.9569, 0.5373, 0.4353}{\textbf{88.19±1.99}} \\
    B5    & 51.56±2.19 & \textcolor[rgb]{0.3843, 0.6588, 0.8706}{95.10±0.40} & 95.13±0.39 & 93.90±1.11 & 83.55±1.12 & \textcolor[rgb]{ 0.9569, 0.5373, 0.4353}{\textbf{96.42±0.56}} \\
    B6    & 48.95±2.59 & 88.37±0.29 & 88.20±0.75 & \textcolor[rgb]{0.3843, 0.6588, 0.8706}{89.38±0.39} & 78.96±0.76 & \textcolor[rgb]{ 0.9569, 0.5373, 0.4353}{\textbf{91.20±1.62}} \\
    \midrule
    S1    & 46.97±2.16 & 91.78±1.98 & 86.57±1.46 & \textcolor[rgb]{0.3843, 0.6588, 0.8706}{93.45±0.27} & 82.04±4.02 & \textcolor[rgb]{ 0.9569, 0.5373, 0.4353}{\textbf{93.92±0.84}} \\
    S2    & 46.58±3.00 & 86.12±0.49 & 80.46±1.65 & \textcolor[rgb]{0.3843, 0.6588, 0.8706}{90.01±0.45} & 72.93±0.57 & \textcolor[rgb]{ 0.9569, 0.5373, 0.4353}{\textbf{95.42±0.83}} \\
    S3    & 46.55±0.90 & \textcolor[rgb]{ 0.9569, 0.5373, 0.4353}{\textbf{94.58±1.03}} & 94.45±0.53 & 93.38±0.18 & 91.30±0.35 & \textcolor[rgb]{0.3843, 0.6588, 0.8706}{94.51±0.94} \\
    S4    & 48.11±1.45 & 87.06±0.56 & 80.89±0.84 & \textcolor[rgb]{0.3843, 0.6588, 0.8706}{91.53±0.64} & 74.10±0.62 & \textcolor[rgb]{ 0.9569, 0.5373, 0.4353}{\textbf{94.43±0.79}} \\
    S5    & 46.55±0.87 & \textcolor[rgb]{0.3843, 0.6588, 0.8706}{95.36±0.56} & 94.55±0.47 & 94.42±0.29 & 92.67±0.28 & \textcolor[rgb]{ 0.9569, 0.5373, 0.4353}{\textbf{95.79±0.24}} \\
    S6    & 48.42±1.82 & 92.69±0.33 & 88.88±0.80 & \textcolor[rgb]{0.3843, 0.6588, 0.8706}{95.11±0.46} & 83.62±0.29 & \textcolor[rgb]{ 0.9569, 0.5373, 0.4353}{\textbf{96.09±0.64}} \\
    \bottomrule
    \end{tabularx}%
  \label{tab:addlabel}%
\end{subtable}
\begin{subtable}{.47\linewidth}
  \centering
  \caption{AUC on Fashion and LandUse21 with a missing rate of 0.3}
    \begin{tabularx}{\linewidth}{
      >{\hsize=.05\hsize}X|
      >{\hsize=.2375\hsize}X
      >{\hsize=.2375\hsize}X
      >{\hsize=.2375\hsize}X
      >{\hsize=.2375\hsize}X
      >{\hsize=.2375\hsize}X
      >{\hsize=.2375\hsize}X}
   \toprule
          & \multicolumn{1}{c}{MODDIS} & \multicolumn{1}{c}{MODDIS} & \multicolumn{1}{c}{NCMOD} & \multicolumn{1}{c}{SRLSP} & \multicolumn{1}{c}{MODGD} & \multicolumn{1}{c}{Ours} \\
    \midrule
    F1    & 44.97±6.51 & 90.94±0.64 & \textcolor[rgb]{ 0.3843, 0.6588, 0.8706}{92.06±0.58} & 92.05±0.32 & 83.46±0.41 & \textcolor[rgb]{ 0.9569, 0.5373, 0.4353}{\textbf{97.67±0.24}} \\
    F2    & 46.32±3.92 & 86.47±0.28 & 87.40±0.26 & \textcolor[rgb]{ 0.3843, 0.6588, 0.8706}{87.60±0.44} & 74.29±0.48 & \textcolor[rgb]{ 0.9569, 0.5373, 0.4353}{\textbf{96.65±0.12}} \\
    F3    & 45.26±8.27 & 95.44±0.35 & \textcolor[rgb]{ 0.3843, 0.6588, 0.8706}{96.32±0.12} & 96.27±0.35 & 93.03±0.26 & \textcolor[rgb]{ 0.9569, 0.5373, 0.4353}{\textbf{98.71±0.17}} \\
    F4    & 45.96±6.49 & \textcolor[rgb]{ 0.3843, 0.6588, 0.8706}{88.27±0.88} & 85.65±1.78 & 87.73±2.79 & 75.68±1.37 & \textcolor[rgb]{ 0.9569, 0.5373, 0.4353}{\textbf{96.17±0.48}} \\
    F5    & 44.83±9.97 & 96.31±0.59 & 96.78±0.18 & \textcolor[rgb]{ 0.3843, 0.6588, 0.8706}{97.05±0.49} & 90.24±4.55 & \textcolor[rgb]{ 0.9569, 0.5373, 0.4353}{\textbf{98.49±0.24}} \\
    F6    & 46.67±6.36 & 92.22±0.45 & 92.50±0.23 & \textcolor[rgb]{ 0.3843, 0.6588, 0.8706}{93.13±0.36} & 82.99±2.66 & \textcolor[rgb]{ 0.9569, 0.5373, 0.4353}{\textbf{97.15±0.21}} \\
    \midrule
    L1    & 48.09±7.75 & 89.86±0.94 & 85.38±0.17 & \textcolor[rgb]{ 0.3843, 0.6588, 0.8706}{92.05±0.65} & 86.03±0.57 & \textcolor[rgb]{ 0.9569, 0.5373, 0.4353}{\textbf{95.54±1.66}} \\
    L2    & 47.36±5.38 & 83.76±1.33 & 78.31±0.97 & \textcolor[rgb]{ 0.3843, 0.6588, 0.8706}{87.13±1.20} & 78.62±0.87 & \textcolor[rgb]{ 0.9569, 0.5373, 0.4353}{\textbf{95.86±1.11}} \\
    L3    & 47.69±6.00 & 96.07±0.94 & 94.58±0.25 & \textcolor[rgb]{ 0.3843, 0.6588, 0.8706}{96.65±0.80} & 93.78±0.66 & \textcolor[rgb]{ 0.9569, 0.5373, 0.4353}{\textbf{97.18±0.59}} \\
    L4    & 48.31±5.12 & 84.82±1.64 & 79.18±0.59 & \textcolor[rgb]{ 0.3843, 0.6588, 0.8706}{88.20±1.67} & 79.81±1.11 & \textcolor[rgb]{ 0.9569, 0.5373, 0.4353}{\textbf{94.36±1.01}} \\
    L5    & 50.64±7.06 & 96.22±0.96 & 94.15±0.11 & \textcolor[rgb]{ 0.3843, 0.6588, 0.8706}{97.01±0.73} & 94.57±0.40 & \textcolor[rgb]{ 0.9569, 0.5373, 0.4353}{\textbf{98.17±0.28}} \\
    L6    & 50.10±5.80 & 90.80±1.07 & 87.17±0.14 & \textcolor[rgb]{ 0.3843, 0.6588, 0.8706}{93.03±0.89} & 88.02±0.84 & \textcolor[rgb]{ 0.9569, 0.5373, 0.4353}{\textbf{96.24±0.48}} \\
    \bottomrule
    \end{tabularx}%
  \label{tab:addlabel}%
\end{subtable}%
\hspace{25pt}
\begin{subtable}{.47\linewidth}
  \centering
  \caption{AUC on Fashion and LandUse21 with a missing rate of 0.45}
   \begin{tabularx}{\linewidth}{
      >{\hsize=.05\hsize}X|
      >{\hsize=.2375\hsize}X
      >{\hsize=.2375\hsize}X
      >{\hsize=.2375\hsize}X
      >{\hsize=.2375\hsize}X
      >{\hsize=.2375\hsize}X
      >{\hsize=.2375\hsize}X}
    \toprule
          & \multicolumn{1}{c}{MODDIS} & \multicolumn{1}{c}{MODDIS} & \multicolumn{1}{c}{NCMOD} & \multicolumn{1}{c}{SRLSP} & \multicolumn{1}{c}{MODGD} & \multicolumn{1}{c}{Ours} \\
    \midrule
    F1    & 44.41±5.71 & \textcolor[rgb]{ 0.3843, 0.6588, 0.8706}{92.23±1.90} & 92.16±0.30 & 90.56±0.96 & 85.26±3.00 & \textcolor[rgb]{ 0.9569, 0.5373, 0.4353}{\textbf{97.80±0.24}} \\
    F2    & 46.77±3.65 & 86.94±0.42 & \textcolor[rgb]{ 0.3843, 0.6588, 0.8706}{88.14±0.29} & 87.23±0.48 & 74.67±0.46 & \textcolor[rgb]{ 0.9569, 0.5373, 0.4353}{\textbf{96.58±0.24}} \\
    F3    & 43.94±8.34 & 95.69±1.09 & 94.51±2.26 & \textcolor[rgb]{ 0.3843, 0.6588, 0.8706}{95.88±0.33} & 92.00±2.40 & \textcolor[rgb]{ 0.9569, 0.5373, 0.4353}{\textbf{98.87±0.16}} \\
    F4    & 46.87±4.27 & \textcolor[rgb]{ 0.3843, 0.6588, 0.8706}{89.38±1.28} & 84.92±3.19 & 87.83±2.96 & 77.12±2.17 & \textcolor[rgb]{ 0.9569, 0.5373, 0.4353}{\textbf{95.99±0.48}} \\
    F5    & 43.58±9.37 & \textcolor[rgb]{ 0.3843, 0.6588, 0.8706}{97.16±0.56} & 88.18±3.67 & 96.80±1.22 & 86.16±4.52 & \textcolor[rgb]{ 0.9569, 0.5373, 0.4353}{\textbf{98.47±0.11}} \\
    F6    & 45.76±6.86 & 92.71±0.39 & \textcolor[rgb]{ 0.3843, 0.6588, 0.8706}{93.20±0.37} & 92.96±0.36 & 84.68±0.42 & \textcolor[rgb]{ 0.9569, 0.5373, 0.4353}{\textbf{96.79±0.13}} \\
    \midrule
    L1    & 44.72±7.02 & 89.81±1.12 & 84.50±1.15 & \textcolor[rgb]{ 0.3843, 0.6588, 0.8706}{91.43±0.72} & 83.76±1.43 & \textcolor[rgb]{ 0.9569, 0.5373, 0.4353}{\textbf{94.72±0.65}} \\
    L2    & 45.09±5.97 & 84.08±0.88 & 77.23±0.48 & \textcolor[rgb]{ 0.3843, 0.6588, 0.8706}{87.25±0.83} & 75.65±1.22 & \textcolor[rgb]{ 0.9569, 0.5373, 0.4353}{\textbf{94.43±0.73}} \\
    L3    & 46.82±5.21 & 96.03±0.36 & 93.16±0.40 & \textcolor[rgb]{ 0.3843, 0.6588, 0.8706}{96.35±0.47} & 92.13±0.99 & \textcolor[rgb]{ 0.9569, 0.5373, 0.4353}{\textbf{97.02±0.81}} \\
    L4    & 46.83±6.43 & 84.88±1.29 & 78.65±0.56 & \textcolor[rgb]{ 0.3843, 0.6588, 0.8706}{88.34±1.13} & 77.72±1.50 & \textcolor[rgb]{ 0.9569, 0.5373, 0.4353}{\textbf{93.42±0.83}} \\
    L5    & 48.81±5.08 & 96.27±0.42 & 94.14±0.25 & \textcolor[rgb]{ 0.9569, 0.5373, 0.4353}{\textbf{96.81±0.40}} & 93.43±0.43 & \textcolor[rgb]{ 0.3843, 0.6588, 0.8706}{96.62±1.12} \\
    L6    & 48.29±3.48 & 91.11±1.05 & 86.34±0.65 & \textcolor[rgb]{ 0.3843, 0.6588, 0.8706}{93.26±0.58} & 86.99±0.72 & \textcolor[rgb]{ 0.9569, 0.5373, 0.4353}{\textbf{95.47±0.86}} \\
    \bottomrule
    \end{tabularx}%
  \label{tab:addlabel}%
\end{subtable}
\label{tab:aucs2}
\end{table*}
\textbf{Datasets and evaluation protocols.} The details  of four datasets are recorded in Table \ref{Datasets}. For a simpler notation, we denote LandUse-21 \cite{LandUse21}, Scene15 \cite{SCENE15}, BDGP \cite{BDGP} and Fashion \cite{Fashion} as `L', `S', `B' and `F' respectively for short.

Following the previous work \cite{LDSR,MODDIS,SRLSP,MODGD}, we generate outliers in these datasets with the following strategy: (1) For attribute outliers, we randomly choose an instance, and replace its feature in all views by random values. (2) For class outliers, we randomly take some pairs of instances and swap the feature vectors in \( \left\lfloor \frac{V}{2} \right\rfloor \) views while keeping feature vectors in the other views unchanged. (3) For class-attribute outliers, we also randomly choose some pairs of instances, swap feature vectors in \( \left\lfloor \frac{V}{2} \right\rfloor \) views, and replacing features with random values in the other views. Also, we vary the outlier ratio for a more comprehensive evaluation. Table \ref{idtable} illustrates the different combinations for ratios of attribute outlier (\( \rho_1 \)), class outlier (\( \rho_2 \)) and class-attribute outlier (\( \rho_3 \)).

Besides, as the original datasets are all complete, we follow \cite{CL} to form partial multi-view data by randomly removing one view of some randomly selected instances. The view missing rate is defined as \( \frac{N_{all} - N}{N_{all}} \), where \( N_{all} \) is the total number of instances involved in partial multi-view data. To evaluate the ability of dealing different degree of view missing, we evaluate the methods on the missing rate of 0, 0.15, 0.3, 0.45, respectively. It is noteworthy that we also use complete multi-view datasets for evaluation, to show the strength of the proposed method in an ideal case.

\noindent\textbf{Baselines.} We compare our method with five MVOD methods including MODDIS \cite{MODDIS}, NCMOD \cite{NCMOD}, SRLSP \cite{SRLSP},  MODGD \cite{MODGD} and CL \cite{CL}. Among them, the first four models are merely designed for complete multi-view data. So for these methods, partial multi-view data is imputed using the method proposed by a recent incomplete multi-view learning framework DSIMVC \cite{DSIMVC} for a fair comparison.

\noindent\textbf{Implementation details.} The structures of AEs are slightly different across datasets. 
For LandUse21 and Scene15, we use three fully-connected layers as the encoder, and their latent dimensions are 1024-1024-64. For BDGP and Fashion, the depth of the encoder is 2, and the structure is 1024-64 and 1024-256, respectively. The decoders then have a reverse structure.
The activation function is ReLU. The Adam optimizer is adopted with the learning rate of \(1e^{-3}\) for training. The hyperparameter \(\lambda_1\) and \(\lambda_2\) are fixed to 1 and \(\eta\) is fixed to 0.05. The number of nearest neighbors $K$ is set to 6 for all datasets. 
We design a piecewise-linear scheduler for \(\mu\) to adjust the impact of SR. In the first 100 epochs, \(\mu\) increases from 0 to a specific value \(\mu_1\) linearly, and then rises to a larger value \(\mu_2\) in the rest epochs. \(\mu_1\)/\(\mu_2\) is set as 0.01/0.2, 0.02/0.2, 0.02/0.4, 0.05/0.4 on BDGP, LandUse21, Scene15 and Fashion, respectively. 


\subsection{Comparisons with Baseline Methods}

The detection AUC results \cite{viplref7} under different missing rates are recorded in Table \ref{tab:aucs1} and \ref{tab:aucs2}. The dataset name is shorted and combined with the setting id denoted in Table \ref{idtable}. From these tables we have the following observations:
\begin{itemize}
    \item RCPMOD outperforms all the baseline methods in most settings, regardless of whether the dataset is partial or not. Among all datasets, our method achieves best performance on Fashion, surpassing the second best models in all settings with a relative improvement of up to 9.1\%. 
    \item When there are more class outliers (\textit{i.e.,} setting 2 and 4), the performance of competitors is obviously degenerated. This is mainly due to their lacking of attention to class outliers or the inability of detecting class outliers in boundary situations. In contrast, our method could achieve much higher AUCs on these settings, which indicates the superiority of our method when detecting class outliers. The performance degradation of baselines under different ratios of class-attribute outliers is less obvious. The reason might be that such outliers are also detectable based on their abnormal attributes in some views. Nevertheless, our method still outperforms the baselines in settings with more class-attribute outliers (\textit{i.e.,} setting 1 and 3), which can be attributed to the enhanced detection of class-attribute outliers based on view inconsistency.
    \item Despite CL can directly deal with partial multi-view data, it is originally designed only for the detection of class outliers. This results in its poor performance in the presence of attribute and class-attribute outliers.
\end{itemize}

\begin{figure*}[t]
\centering
    \includegraphics[width=0.24\textwidth]{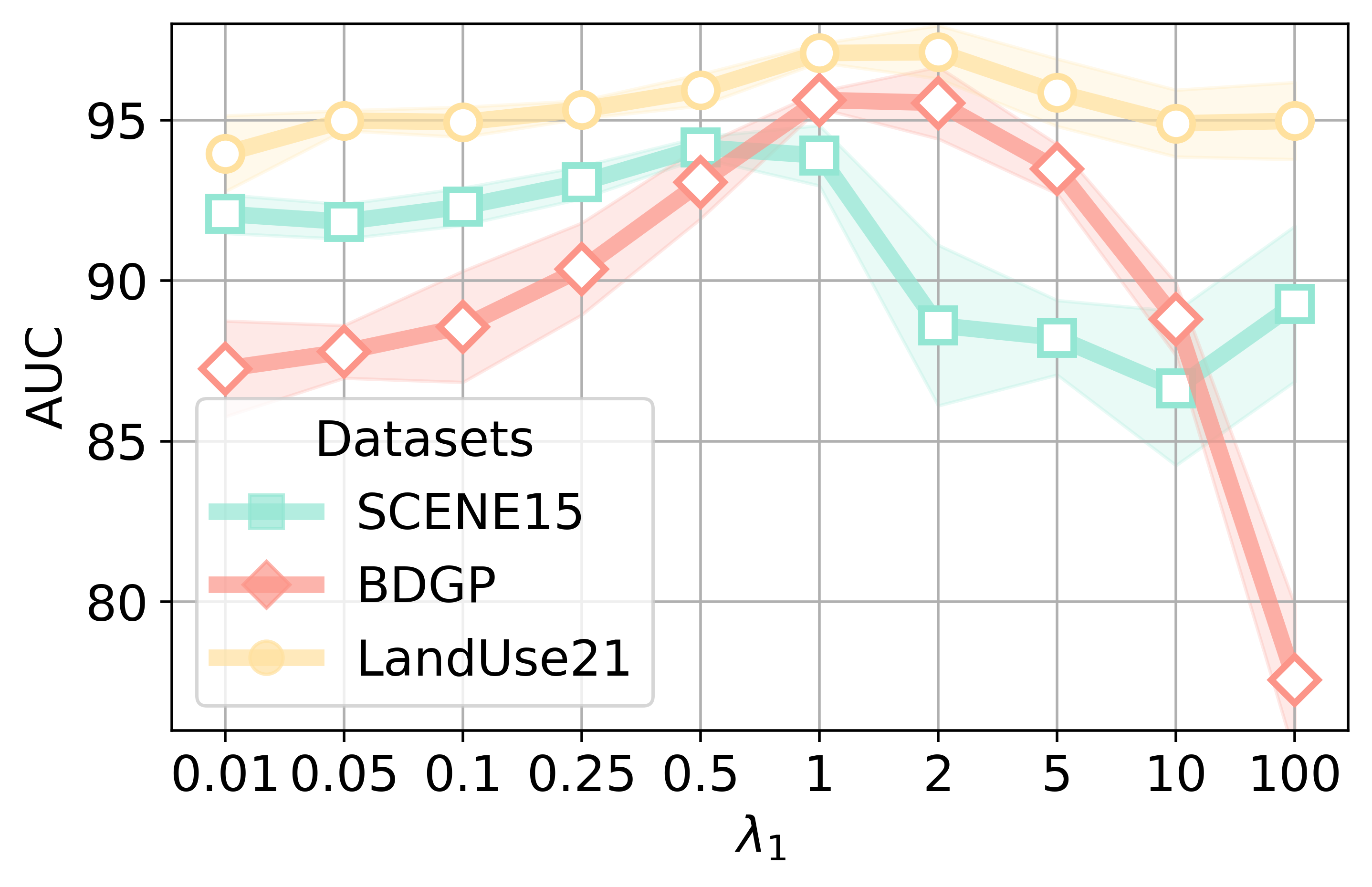}
    \includegraphics[width=0.24\textwidth]{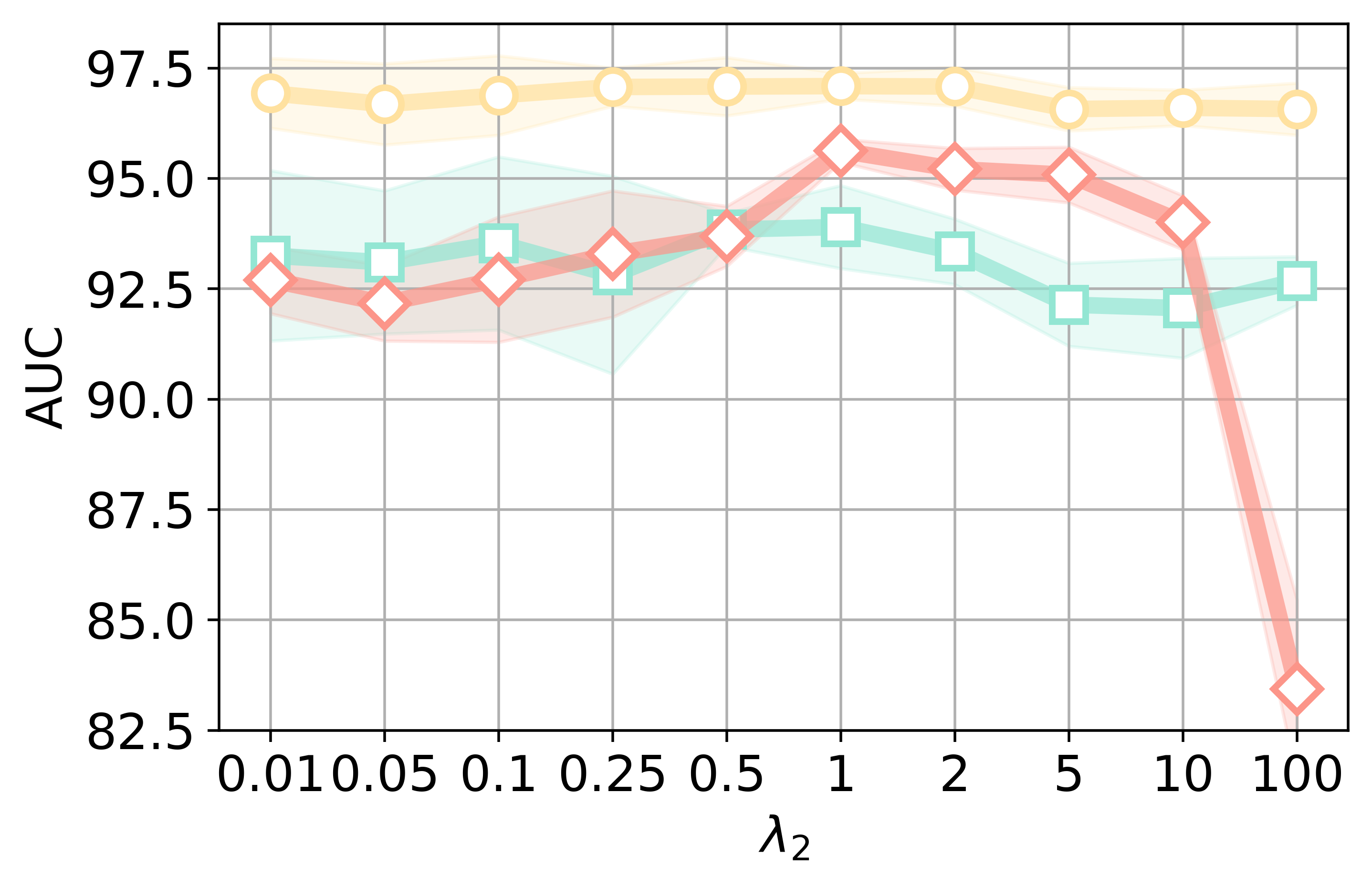}
    \includegraphics[width=0.24\textwidth]{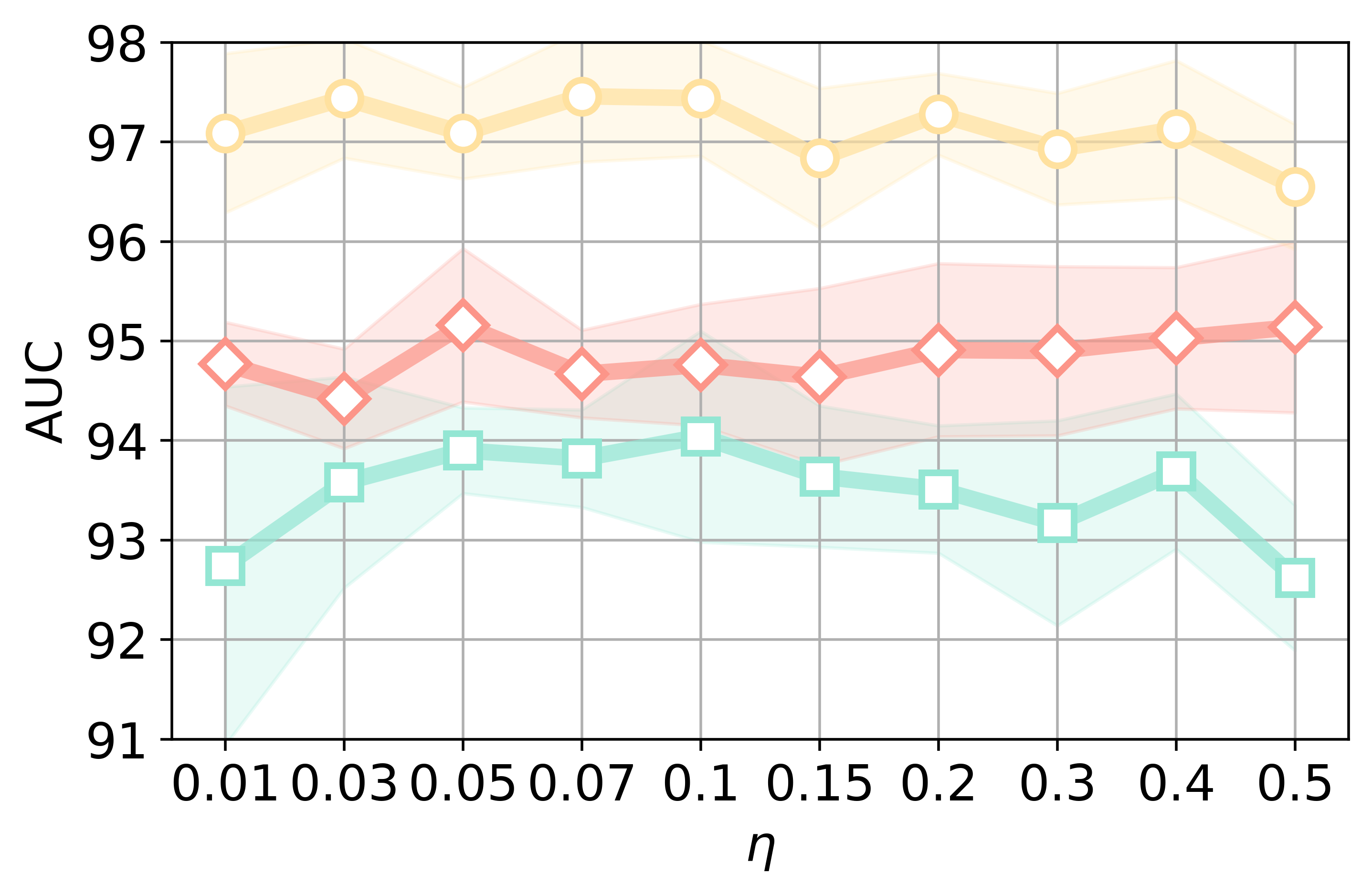}
    \includegraphics[width=0.24\textwidth]{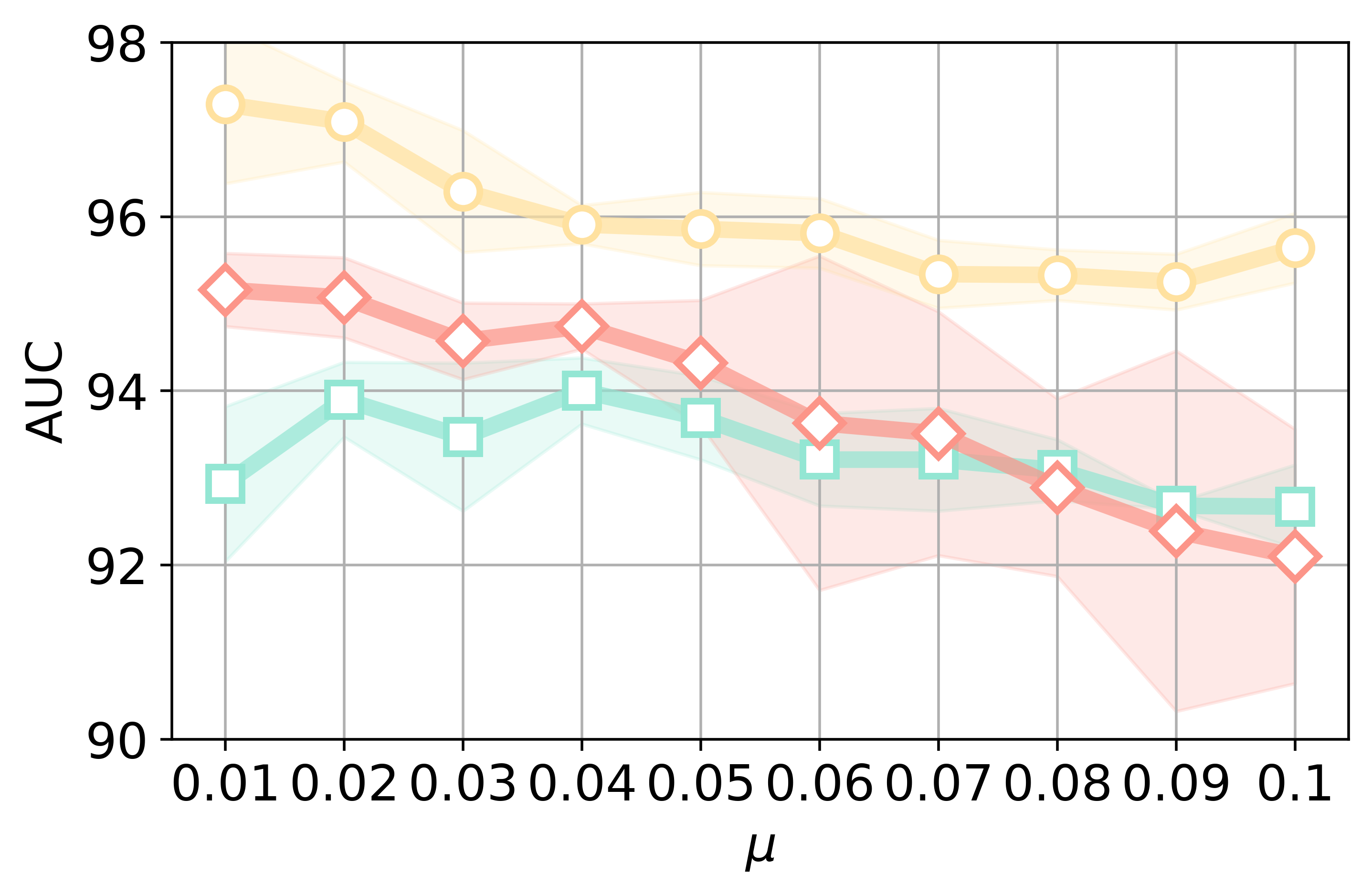}
\caption{Sensitivity analysis over \(\lambda_1\), \(\lambda_2\) \(\eta\) and \(\mu\) on different datasets.}
\label{fig:para}
\end{figure*}

\subsection{Sensitivity Analysis}
Our method contain several important hyperparameters including the balancing factor \(\lambda_1\), \(\lambda_2\), \(\mu\), and the sampling rate \(\eta\) for the memory bank. 
We then analyze their sensitivity when the missing rate is 0.3 and all the outlier ratios are 0.05. As a scheduler of \(\mu\) is adopted in the training, we only vary \(\mu_1\) used in the warm-up stage which empirically has more impact on the results.

\noindent\textbf{Impact of \(\lambda_1\) and \(\lambda_2\).} As shown in the first two subplots of Fig.~\ref{fig:para}, a relatively large value of \(\lambda_1\) and \(\lambda_2\) would be beneficial. But when they are assigned with excessively large values with \(\mu\) unchanged, the performance of RCPMOD will significantly decrease due to the overfitting to outliers. 

\noindent\textbf{Impact of \(\eta\).} From the third subplot, we see that the performance is relatively stable within the whole range. Note that the curves roughly peak at an \(\eta\) value of 0.05 or 0.1, which is close to the ratio of class-related outliers in datasets. 

\noindent\textbf{Impact of \(\mu\).} The last subplot of Fig.~\ref{fig:para} demonstrates the performance tends to decrease when this value is increased. Apparently it shows that a large \(\mu\) is not a good choice, suggesting that arbitrarily pushing away the points can negatively affect both the performance and stability of the model.

From the above results, we can observe that \( \mathcal{L}_{ar}\), \(\mathcal{L}_{oa}\), and \(\mathcal{L}_{na}\) play equally significant roles. Therefore, setting both \(\lambda_1\) and \(\lambda_2\) to 1 
can achieve optimal performance. However, \(\mu\) might induce larger performance change and thus requires careful selection.

\begin{table}[t]\small
  \centering
  \caption{Ablation study on loss components.}
    \begin{tabular}{ccccccc}
    \toprule
    &  \(\mathcal{L}_{\text{oa}}\)  &  \(\mathcal{L}_{\text{na}}\) & \(\mathcal{L}_{\text{sr}}\) &   BDGP & SCENE15 & LandUse21 \\
    \midrule
    (A)   &       &       &       & 21.70  & 22.59 & 36.59 \\
    (B)   & \checkmark     &       &       & 92.65 & 88.46 & 94.20 \\
    (C)   & \checkmark     & \checkmark     &       & 94.84 & 90.46 & 95.37 \\
    (D)   & \checkmark     &       & \checkmark     & 92.38 & 92.60  & 94.64 \\
    (E)   &       & \checkmark     & \checkmark     & 86.66 & 91.43 & 93.37 \\
    (F)   & \checkmark     & \checkmark     & \checkmark     & \textbf{95.16} & \textbf{93.35} & \textbf{96.27} \\
    \bottomrule
    \end{tabular}%
  \label{tab:abla}%
\end{table}%

\subsection{Ablation Study}
The ablation results of each loss module are shown in Table~\ref{tab:abla}. From ablated variants (C), (D) and (E), we can observe that removing anyone of \(\mathcal{L}_{\text{oa}}\), \(\mathcal{L}_{\text{na}}\) and \(\mathcal{L}_{\text{sr}}\) will clearly degrade the performance, indicating that all losses are indispensable in our method. On the other hand, the impact of each loss component varies across the datasets. Results of variant (B) and (E) on BDGP and LandUse21 indicate that \(\mathcal{L}_{\text{oa}}\) is the most important factor in improving detection ability on these datasets, while according to variants (C) and (D), we can find that the regularizer have a large impact on detection in SCENE15, which can also be observed in Fig.~\ref{fig:KLlines1}. Due to space limitations, we present some results such as the time efficiency comparison and computation method comparison in the supplementary materials.

\section{CONCLUSION}
In this paper, we propose a novel contrastive partial MVOD method named RCPMOD. Specifically, we design an outlier-aware contrastive loss with a potential outlier memory bank, ensuring that outliers are distinctly featured during the training process. A neighbor alignment contrastive loss is also proposed to learn shared local structural connections between views and this loss also enhances the effect of Cross-view Relation Transfer adopted to impute missing samples in our framework. Besides, to addresss the observed outlier overfitting phenomenon, we adopt a spreading regularization as a solution. Notably, the proposed method could also deal with outliers in the complete multi-view setting. Experimental results on four benchmarks show that it can achieve the best performance under various outlier ratios and view missing rates.
\begin{acks}
This work was supported in part by the National Key R\&D Program of China under Grant 2018AAA0102000, in part by National Natural Science Foundation of China: 62236008, U21B2038, U23B2051, 61931008, 62122075 and 61976202, in part by Youth Innovation Promotion Association CAS, in part by the Strategic Priority Research Program of the Chinese Academy of Sciences, Grant No. XDB0680000, in part by the Innovation Funding of ICT, CAS under Grant No.E000000, in part by the China Postdoctoral Science Foundation (CPSF) under Grant No.2023M743441, and in part by the Postdoctoral Fellowship Program of CPSF under Grant No.GZB20230732. 
\end{acks}

\bibliographystyle{ACM-Reference-Format}
\bibliography{camera-ready}

\end{document}